%%
%% Please note that your paper must be no more than five pages in
%% the IEEEtran conference style as presented here (including figures,
%% references, etc.!)
%
% IDEAS
% 1) is there a way of defining a dual code?
% 2) classify families of groups like dihedral which are really the same as abelian groups
% 3) maybe some nonabelian groups with abelian quotients do give interesting abelian codes?
\documentclass{llncs}

\usepackage{verbatim}
\usepackage{amsmath}
\usepackage{amssymb}
\usepackage{amsfonts}
\usepackage{latexsym}
\usepackage{amscd}
\usepackage{graphicx}
\usepackage{xcolor}
\usepackage{url}
\usepackage{pb-diagram}
\usepackage{longtable}
\usepackage{soul}
\usepackage{hyperref}
\usepackage{cite}
%% Conference papers do not typically use \thanks and this command
%% is locked out in conference mode. If really needed, such as for
%% the acknowledgment of grants, uncomment the following:
%\IEEEoverridecommandlockouts

\newcommand{\Ac}{\mathcal{A}}

\newcommand{\NN}{\mathbb{N}}

\newcommand{\nn}{\mathbb{N}}

\newcommand{\Cc}{\mathcal{C}}
\newcommand{\Pc}{\mathcal{P}}
\newcommand{\Bc}{\mathcal{B}}
\newcommand{\Dc}{\mathcal{D}}
\newcommand{\Qc}{\mathcal{Q}}
\newcommand{\Ic}{\mathcal{I}}
\newcommand{\I}{\mathbf{I}}
\newcommand{\G}{\mathbf{G}}
\newcommand{\B}{\mathbf{B}}
\newcommand{\A}{\mathbf{A}}
\newcommand{\C}{\mathbf{C}}

\newcommand{\y}{{\bf{y}}}
\newcommand{\x}{{\bf{x}}}

\newcommand{\bldy}{{\bf{y}}}
\newcommand{\bldx}{{\bf{x}}}

\newcommand\scalemath[2]{\scalebox{#1}{\mbox{\ensuremath{\displaystyle #2}}}}

\begin{document}

\sloppy

%% Paper Title
%% You can use linebreaks \\ within to get better formatting as
%% desired.
\title{Constructions and Bounds for Batch Codes with Small Parameters}
%% Author names and affiliations:
%%
%% Avoiding spaces at the end of the author lines is not a problem with
%% conference papers because we don't use \thanks or \IEEEmembership.
%%
%% For several authors with only one affiliation:
\author{Eldho K. Thomas \and Vitaly Skachek}
   \institute{Institute of Computer Science\\
     University of Tartu, Estonia\\
     \email{eldho.thomas@ut.ee, vitaly.skachek@ut.ee}}

%% To balance the two columns, you should reduce the text-height of
%% the last page using the following command:
%%%%%%%%%%%%%%%%%%%%%%%%%%%%%%%%%%%%%%%%%%%%%%%%%%%%%%%%%%%%%%%%%%%%%
%\addtolength{\textheight}{-9.35cm}
%%%%%%%%%%%%%%%%%%%%%%%%%%%%%%%%%%%%%%%%%%%%%%%%%%%%%%%%%%%%%%%%%%%%%
%% with an appropriate value. This command must be place on the second
%% last page, i.e., for a one-page abstract here, for a two-page
%% abstract right after the \maketitle command.

%% Create the title:
\maketitle

%% Abstract:
%% For the final version of the accepted paper, please make sure you
%% remove the comment "THIS PAPER IS ELIGIBLE FOR THE STUDENT PAPER
%% AWARD."
%%
\begin{abstract}
Linear batch codes and codes for private information retrieval (PIR) with a query size $t$ and a restricted size $r$ of the 
reconstruction sets are studied. New bounds on the parameters of such codes 
are derived for small values of $t$ or $r$ by providing corresponding constructions.
By building on the ideas of Cadambe and Mazumdar, a new bound in a recursive form is derived for batch codes and PIR codes. 

\keywords{PIR codes, batch codes, private information retrieval, locally repairable codes, distributed data storage}
\end{abstract}
%*************************************************************************%
%
% INTRO
%
%*************************************************************************%
\section{Introduction}
 Batch codes are proposed in \cite{ISHAI} for load balancing in the distributed server systems. They can be broadly classified as linear batch codes and combinatorial batch codes. A particular version of the former is known as switch codes and were mainly studied in \cite{chee, mao,wang} in the context of network switches. Some works on combinatorial batch codes can be found in \cite{BKM, BT,ruj}. 

Locally repairable codes (LRC codes), or codes with locality, which are deeply studied in \cite{CM,FY, GHSY, RMV, RPDV}, share lots of similarities with batch codes, and therefore many of the properties of these two code families are expected to be related to each other. In \cite{ZS},  new upper bounds on the parameters of batch codes based on the classical Singleton bound which do not depend on the size of the underlying alphabet are derived. 
Batch codes turn out to be a special case of private information retrieval (PIR) codes~\cite{FVY}. Indeed, PIR codes support only queries of type $(x_i, x_i,\ldots , x_i), 1 \le i \le k$, whereas batch codes support queries of a more general form 
$(x_{i_1} , x_{i_2}, \ldots, x_{i_t} )$, possibly for different indices $i_1, i_2, \ldots , i_t$. It follows that batch codes can be used as PIR codes. In \cite{VY}, batch codes with unrestricted size of reconstruction sets are considered and some bounds on the optimal length of batch and PIR codes for a given batch size and dimension are proposed.

In this work, we construct new families of batch codes with restricted size $r$ of reconstruction sets. We also generalize the existing bounds on the dimension of LRC codes proposed in \cite{RMV} to batch codes using the connections between the two families. This paper is organized as follows.  In Section \ref{sec:three}, we propose an optimal construction of batch codes with $t=2, r\ge 2,$ and a construction with $t \ge 3, r=2$. In Section~\ref{sec:four}, we present constructions of PIR codes with arbitrary $t$ and $r$ for $r|k$. In Section~\ref{sec:ab}, we derive a new upper bound on the dimension $k$ of batch codes. 

%*************************************************************************************************%
%
%  Notations
%
%************************************************************************************************%
\section{Notations and related works}
We start with introducing some notations. Denote by $\NN$ the set of nonnegative integers. For $n \in \NN$ we denote $[n]=\{1, 2,\ldots, n\}$. A $k\times k$ identity matrix will be denoted by $\I_k$, all-one column vector by ${\bf 1}$. We use ${\bf 0}$ to denote an all-zero column vector and a zero matrix. The right dimensions will be clear from the context. 
Let $\bldx$ be a vector of length $n$ indiced by $[n]$. Take $S \subseteq [n]$. Then $\bldx_S$ stands for a sub-vector of $\bldx$ indiced by $S$. If $\A$ is a matrix, then $\A^{[i]}$ denotes the $i$-th column in $\A$. 

Let $\Qc$ be a finite alphabet. Consider an information vector $\x = (x_1, x_2, \ldots , x_k) \in \Qc^k$. 
The code is a set of vectors $\{\y = \Cc(\x) \; | \;  \x \in 	\Qc^k \} \subseteq \Qc^n$,
where $\Cc: \Qc^k \rightarrow \Qc^n$ is a bijective mapping, and $n \in \NN$. By slightly abusing the notation, $\Cc$
will also be used to denote the above code. 

%=============
In this work, we study (primitive, multiset) batch codes with restricted size of the recovery sets, 
as they are defined in~\cite{ZS} (see also~\cite{VY}).  

\begin{definition}
An $(n,k, r, t)$ batch code $\Cc$ over a finite alphabet $\Qc$ is defined by
an encoding mapping $\Cc \; : \; \Qc^k \rightarrow \Qc^n$, and a decoding mapping $\Dc \; : \; \Qc^n \times [k]^t\rightarrow \Qc^t$, such that  
\begin{enumerate}
\item
For any $\bldx \in \Qc^k$ and a multiset $(i_1, i_2, \cdots, i_t) \subseteq [k]^t$, 
\[
\Dc\left(\bldy = \Cc(\bldx), i_1, i_2, \cdots, i_t\right) = (x_{i_1}, x_{i_2}, \cdots, x_{i_t}) \; .
\]
\item 
The symbols in the query $(x_{i_1}, x_{i_2}, \cdots, x_{i_t})$ can be reconstructed from 
$t$ respective disjoint recovery sets of symbols of $\bldy$ of size at most $r$ each
(the symbol $x_{i_\ell}$ is reconstructed from the $\ell$-th recovery set for each $\ell$, $1 \le \ell \le t$). 
\end{enumerate}
\label{def:batch}
\end{definition}
If the alphabet $\Qc$ is a finite field, and the associated encoding mapping $\Cc \; : \; \Qc^k \rightarrow \Qc^n$
is linear over $\Qc$, the corresponding code is termed linear. In that case, for each fixed $(i_1, i_2, \cdots, i_t) \in [k]^t$, the corresponding decoding mapping $\Dc$ from $\Qc^n$ to $\Qc^t$ is linear over $\Qc$ too. Additionally, 
if the encoding mapping $\Cc \, : \, \bldx \mapsto \bldy$ is such that $\bldx$ is a sub-vector of $\bldy$,
then the corresponding code is called {\it systematic}.

This setup has first appeared in~\cite{LS}. It was shown therein that the minimum distance $d$ of a batch code satisfies $d \ge t$. 
It is worth mentioning that batch codes are closely related to locally repairable codes, 
which have been extensively studied in the context of the distributed data storage.  
The main difference between them is that in batch codes we are interested in the reconstruction of information symbols in 
$\bldx$, while in locally repairable codes we are interested in the recovery of coded symbols in~$\bldy$. 

\medskip
The following property of batch codes is stated as Corollary III.2 in \cite{ZS}.  
\begin{lemma}
Let $\Cc$ be a linear $(n,k,r,t)$ batch code over $\Qc$, and $\bldx \in \Qc^k$, whose encoding is $\bldy \in \Cc$. Let $R_1, R_2, \cdots, R_t \subseteq [n]$ be $t$ disjoint recovery sets for the coordinate $x_i$. Then, there exist indices $a_2 \in R_2$, $a_3 \in R_3$, $\cdots$, $a_t \in R_t$, such that 
if we fix the values of all coordinates of $\bldy$ indexed by the sets $R_1, R_2 \backslash \{ a_2 \}, R_3 \backslash \{ a_3 \}, \cdots,  R_t \backslash \{ a_t \}$, then the values of the coordinates of $\bldy$ indexed by $\{a_2, a_3, \cdots, a_t \}$ are uniquely determined. 
\label{cor:sets}
\end{lemma}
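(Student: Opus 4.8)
The plan is to exploit the linearity of the decoding maps. Since $\Cc$ is a linear batch code, each recovery set $R_\ell$ for $x_i$ comes equipped with a linear reconstruction function; that is, there are coefficients $\lambda_j^{(\ell)} \in \Qc$, $j \in R_\ell$, such that
\[
x_i = \sum_{j \in R_\ell} \lambda_j^{(\ell)} y_j
\]
holds for every codeword $\bldy = \Cc(\bldx)$, where $y_j$ denotes the $j$-th coordinate of $\bldy$. I would write down these $t$ identities, one per recovery set, as the starting point.

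Next, for each $\ell \in \{2, 3, \ldots, t\}$ I would argue that at least one coefficient $\lambda_j^{(\ell)}$ is nonzero, and choose $a_\ell \in R_\ell$ to be such an index. Indeed, if all the coefficients associated with some $R_\ell$ vanished, the corresponding identity would force $x_i = 0$ for every information vector, which is impossible since $x_i$ ranges freely over all of $\Qc$. Having selected $a_\ell$ with $\lambda_{a_\ell}^{(\ell)} \neq 0$, I would solve the $\ell$-th identity for the single coordinate $y_{a_\ell}$, obtaining
\[
y_{a_\ell} = \frac{1}{\lambda_{a_\ell}^{(\ell)}} \left( x_i - \sum_{j \in R_\ell \setminus \{a_\ell\}} \lambda_j^{(\ell)} y_j \right) .
\]

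Finally I would assemble the conclusion. Fixing all coordinates indexed by $R_1$ determines $x_i$ through the first identity $x_i = \sum_{j \in R_1} \lambda_j^{(1)} y_j$. Then, for each $\ell \ge 2$, the displayed expression for $y_{a_\ell}$ shows that once $x_i$ and the coordinates indexed by $R_\ell \setminus \{a_\ell\}$ are fixed, the value $y_{a_\ell}$ is uniquely determined. Because the recovery sets are disjoint, the chosen indices $a_2, \ldots, a_t$ are distinct and none of them lies in the fixed part $R_1 \cup (R_2 \setminus \{a_2\}) \cup \cdots \cup (R_t \setminus \{a_t\})$, so the claim follows. The only genuinely delicate point is the justification that a linear recovery function with at least one nonzero coefficient exists for each set; everything else is direct substitution. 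I would expect this nonvanishing argument to be the main (though mild) obstacle, as it is precisely where the linearity of $\Cc$ and the freeness of the symbol $x_i$ are used.
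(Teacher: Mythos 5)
Your proof is correct. Note that the paper itself does not prove this lemma at all --- it imports it verbatim as Corollary~III.2 of the reference [ZS] --- and the argument given there is essentially the one you reconstruct: each recovery set carries a linear reconstruction functional $x_i = \sum_{j \in R_\ell} \lambda_j^{(\ell)} y_j$ (this is exactly where the paper's standing assumption that the decoding maps of a linear batch code are linear is used), some coefficient must be nonzero since $x_i$ is not identically zero over the code, and solving for the coordinate $y_{a_\ell}$ carrying that coefficient yields the claim, with disjointness of the $R_\ell$ guaranteeing the chosen indices lie outside the fixed coordinates. So your write-up supplies, by the same route, the proof that the paper only cites.
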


There is a number of bounds on the parameters of batch codes in the literature, but it is often
difficult to make a comparison due to slight variations in the models and assumptions made.
Thus, it is proven in \cite{ZS} that for a linear $(n,k,r,t)$ batch code over $\Qc$,
\begin{equation}
\label{nonsyst}
n \ge k + d + \max_{1\le \beta \le t}\left \{(\beta - 1)\left( \left\lceil \frac{k}{r\beta - \beta + 1} \right \rceil -1 \right )\right \}-1 \; .
\end{equation}
In particular, when the code is systematic, the bound can be tighten a bit, as follows:
\begin{equation}
\label{eq:bound}
n \ge k + d + \max_{2\le \beta \le t}\left \{(\beta - 1)\left( \left\lceil \frac{k}{r\beta - \beta-r + 2} \right \rceil -1 \right )\right \}-1 \; .
\end{equation}
If the queries in Definition~\ref{def:batch} are restricted to $i_1 = i_2 = \cdots = i_t$, then the corresponding code is called an
$(n,k, r, t)$ code for private information retrieval (PIR)~\cite{FVY}, or simply $(n,k, r, t)$ PIR code. In particular, all batch codes are PIR codes with the corresponding parameters. It should be mentioned that 
the proofs of (\ref{nonsyst}) and (\ref{eq:bound}) in \cite{ZS} work in analogous way for the PIR codes too, and therefore 
these two bounds hold for general and systematic $(n,k,r,t)$ PIR codes, respectively.

Systematic linear $(n,k,r,t)$ PIR codes can be viewed as LRC codes with \emph{locality of information symbols} and availability~\cite{RPDV}. A number of bounds on the parameters of the latest family were derived in~\cite{RPDV}, and in subsequent works. 
Specifically, when re-written for systematic batch code setting, the following bound holds:  
\begin{equation}
\label{eq:dima1}
d + k+ \left \lceil  \frac{(t-1)(k-1)+1}{(t-1)(r-1)+1} \right\rceil -2 \le n.
\end{equation}
A comparison between~(\ref{eq:dima1}) and~(\ref{eq:bound}) is not always straightforward, in particular due to the minimization term in~(\ref{eq:bound}).  However, with the aid of a computer, we verified that the bound~(\ref{eq:dima1}) gives equal or slightly higher values of $n$ compared to~(\ref{eq:bound}) for small values of $d,k,r$ and $t$. Hereafter, 
we employ the bound~(\ref{eq:dima1}) in the analysis of the optimal values of $n$. However, this bound is not always tight, especially for small alphabets. 

Binary simplex codes of length $n = 2^m-1$ are shown to be optimal batch codes with parameters $k=m$, $t = 2^{m-1}- 2$ and $r=2$  (for any $m \in \nn$)~\cite{mao}, yet those codes exist only for very specific parameters.

The LRC codes were extensively studied in the last years. Thus, in \cite{tamobarg}, a lower bound on the length $n$ is presented for an LRC with locality \emph{of all symbols} and availability. It should be noted that codes with locality (and availability) of all symbols are a special case of codes 
with locality (and availability) of information symbols, and therefore the bounds derived for the former family are not directly applicable to the latter family. 

It is shown in~\cite{tamobarg} that it is possible to construct a $t$-fold power of the binary $(r+1,r)$ single parity check code in order to obtain an LRC with availability (termed \emph{direct product code}) for specific parameters. 
An algebraic construction of binary LRC codes in~\cite{WZ} further improves on the rate of the direct product code. However, in general, the resulting construction is non-systematic, and thus it is not straightforward how to derive an analogous result for batch/PIR codes, see~\cite[Example 5]{Skachek-survey}. It would be interesting to extend those techniques to batch/PIR codes, 
but that is left out of the scope of this paper.

In Section \ref{sec:four}, we present constructions of binary PIR codes for arbitrary $r$ and $t$ achieving rate $\frac{r}{r+t-1}$ for $k \ge r^2$ similar to their counterparts in~\cite{WZ}. For $t \in \{2,3,4\}$ these codes are batch codes.
 The achieved rate is close to optimal, especially for small values of $t$ and $r$.

A special case of $(n,k, r, t)$ batch and PIR codes, where the size of the recovery sets $r$ is not restricted (for example, it can be assumed that $r =n$), is studied in \cite{VY}. Let $\Bc(k, t)$ be the shortest length $n$ of any systematic linear batch code with unrestricted size of the recovery set, and $\Pc(k, t)$ be the shortest length $n$ of any linear systematic PIR code with unrestricted size of the recovery set. Then the optimal redundancy of batch and PIR codes, respectively, is defined as
$\gamma_{\Bc}(k, t)= \Bc(k, t) - k$ and  $\gamma_\Pc(k, t)= \Pc(k, t) - k$.

\begin{proposition}\cite{VY} It holds $\Bc(k, t) = \Pc(k, t)$ for $1 \le t \le 4$, and
$\gamma_{\Bc}(k, t) \le \gamma_{\Pc}(k, t) + 2\lceil \log(k) \rceil \cdot \gamma_{\Pc}(k/2, t - 2)$
 for $5  \le t \le 7$.
\end{proposition}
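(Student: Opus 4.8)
The plan is to treat the two assertions separately, anchoring both in the elementary fact that a batch code is in particular a PIR code with the same parameters: any systematic linear batch code of length $n$ is a systematic linear PIR code of length $n$, so $\Pc(k,t) \le \Bc(k,t)$, equivalently $\gamma_\Pc(k,t) \le \gamma_\Bc(k,t)$, for all $k$ and $t$ at once. Both parts of the statement are therefore really upper bounds on $\Bc(k,t)$ (resp. $\gamma_\Bc(k,t)$) expressed through PIR quantities.

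For $1 \le t \le 4$ I would prove the reverse inequality $\Bc(k,t) \le \Pc(k,t)$ by showing that a shortest systematic linear PIR code $\Cc$ of length $\Pc(k,t)$ already serves every batch query. Fix a query, i.e. a multiset of $t$ indices from $[k]$, and group it by multiplicity profile: an index occurring once is served directly from its systematic coordinate, while an index $i$ occurring $m \ge 2$ times needs $m$ mutually disjoint recovery sets for $x_i$. The PIR property supplies $t$ pairwise disjoint recovery sets for each symbol, and one may always arrange the systematic singleton $\{i\}$ to be among them (if $i$ lies in one of the $t$ sets, replace that set by $\{i\}$; otherwise $\{i\}$ is already disjoint from all of them). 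The task is then to select, across the repeated indices, enough recovery sets that are globally disjoint and also avoid the singletons committed to the once-occurring indices. For the profiles $3+1$ and $2+1+1$ a short pigeonhole argument suffices: since the $t$ recovery sets of a symbol are disjoint, any fixed coordinate meets at most one of them, leaving slack to dodge a bounded number of already-committed coordinates. I expect the profile $2+2$ at $t=4$ (two symbols each requested twice) to be the genuine obstacle, because there the disjointness constraints couple the two symbols: after using $\{i\}$ and $\{j\}$ for one copy each, the two remaining sets must be mutually disjoint, and a naive count does not rule out a complete conflict pattern between the two families. This case requires a matching (Hall-type) argument, possibly invoking recovery sets beyond a fixed list of $t$, which is the delicate point in this half of the proof.

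For $5 \le t \le 7$ I would establish the recursive inequality by an explicit construction whose redundancy matches the stated bound by direct addition. Take a base systematic PIR code on all $k$ symbols with batch size $t$, contributing $\gamma_\Pc(k,t)$. Writing each index of $[k]$ in $\lceil \log(k) \rceil$ binary digits, for every bit position $\ell$ and value $b \in \{0,1\}$ let $S_{\ell,b}$ be the set of indices whose $\ell$-th bit equals $b$, a set of size at most $k/2$, and append a PIR code of batch size $t-2$ on the symbols indexed by $S_{\ell,b}$. There are $2\lceil \log(k) \rceil$ such auxiliary codes, contributing $2\lceil \log(k) \rceil \cdot \gamma_\Pc(k/2, t-2)$. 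The technical heart is the decoding argument: given a batch query of size $t$, the base code absorbs the bulk of the reads, and collisions between reads destined for different repeated symbols are resolved through the auxiliary codes, using the key fact that any two distinct indices differ in some bit and are hence separated by some pair $S_{\ell,b}$. The reduction $t \mapsto t-2$ per level reflects that each resolution step peels off a bounded number of reads, and one must verify that the routing never asks any single code to serve a load exceeding its batch size. Since the auxiliary batch size $t-2$ stays in $\{3,4,5\}$, the recursion closes using the equality $\gamma_\Pc = \gamma_\Bc$ already proved for $t-2 \le 4$. Throughout, the main obstacle is disjointness bookkeeping — the $2+2$ profile in the first half and the load-balancing of the routing in the second — after which the redundancy counts are immediate.
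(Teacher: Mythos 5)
First, a point of reference: the paper itself does not prove this proposition — it is quoted from \cite{VY} as background, so the standard to compare against is the Vardy--Yaakobi argument, which is substantially more involved than your sketch. Your outline does follow the same general strategy (batch codes are trivially PIR codes; for $t\le 4$ show that a shortest systematic PIR code already serves all batch queries; for $5\le t\le 7$ build a batch code from a base PIR code plus $2\lceil\log k\rceil$ auxiliary PIR codes indexed by bit position and bit value). Your treatment of the profiles $(1,\dots,1)$, $(2,1)$, $(3,1)$ and $(2,1,1)$ — systematic singletons for the distinct indices plus a pigeonhole selection among the $t$ disjoint recovery sets of the repeated index — is correct and settles $t\le 3$ completely.

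However, both places you flag as ``delicate'' are genuine gaps, and in the first one the repair you propose would fail. For the $(2,2)$ profile at $t=4$ you suggest a ``matching (Hall-type) argument.'' No purely combinatorial argument of this kind can work: since the recovery-set size is unrestricted, nothing in the disjointness structure prevents four disjoint recovery sets $R_1,\dots,R_4$ for $x_i$ and four disjoint recovery sets $S_1,\dots,S_4$ for $x_j$ in which \emph{every} $R_a$ meets \emph{every} $S_c$ (sixteen pairwise-distinct shared coordinates realize this pattern), so no selection of two $R$'s and two $S$'s from any fixed lists is pairwise disjoint. The actual proof must exploit linearity of the code: when a recovery set for $x_i$ overlaps one for $x_j$, their union minus suitable coordinates recovers a nontrivial linear combination of $x_i$ and $x_j$, and such combined sets can be reshuffled into four disjoint recovery sets; this algebraic ingredient is entirely absent from your proposal. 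For $5\le t\le 7$, you never specify the routing, and your closing remark is internally inconsistent: for $t=7$ the auxiliary codes have query size $t-2=5$, for which the equality $\gamma_{\Bc}=\gamma_{\Pc}$ is \emph{not} available — this is exactly why the bound is stated with $\gamma_{\Pc}(k/2,t-2)$, and it forces the routing to be designed so that each auxiliary code only ever receives a same-index (PIR-type) sub-query. Exhibiting such a routing for every multiset query — the base code absorbing one symbol's copies and a system-of-distinct-representatives argument over the bit positions assigning the remaining repeated symbols to pairwise different auxiliary codes containing them — is the actual content of that half of the theorem, and it is missing from your sketch.
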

Hereafter, we denote the optimal length of a linear {\it systematic} batch code and PIR code with the size of the reconstruction sets $r$ as $\Bc(k,r,t)$ and $\Pc(k,r,t)$, respectively.
\medskip

%*************************************************************************************************%
%
%  Batch codes
%
%************************************************************************************************%
\section{Batch codes with $r = 2$ or $t=2$}
\label{sec:three}
\subsection{Optimal batch codes with $r \ge 2$ and $t=2$}
In this and subsequent sections, we construct $(n,k,r,t )$ batch codes for specific values of $t$ and $r$.
To this end, consider an $(n,k,r,t)$ systematic batch code with $r\ge 2$ and $t=2$.  
Then, by using $d \ge t$, from the bound in (\ref{eq:dima1}), we have
\begin{equation}
\label{eq:bound2} 
n \ge  \left \lceil \frac{k}{r}\right\rceil+ k \; .
\end{equation}
%where the right-most inequality is due to inequality $d \ge t$.} %I remove ``which is true from both ~(\ref{eq:dima1}) and (\ref{tamobarg})  with $t=2$'', I think it is not necessary.}
% which was proven in~\cite{LS}. Already metioned in the introduction
The construction of codes attaining this bound, for $t=2$ and $r=2$, is presented in~\cite[Example 2]{ZS}. In the sequel,
we generalize that construction to other values of $t$ and $r$. 
\medskip

First, we show that the bound~(\ref{eq:bound2}) is optimal for
$t=2$ and any $r \ge 2$. We achieve that by constructing corresponding $(n,k,r,t)$ batch codes.

Take $\mathbf{G}$ to be a $k\times n$ binary systematic generator matrix of a code $\Cc$ defined as follows:
\begin{equation}
\mathbf{G}=
\left (\begin{array}{ccccc|ccccc}
\mathbf{I}_r & {\bf 0} & \cdots  & {\bf 0} & {\bf 0} & {\bf 1} & {\bf 0} & \cdots  & {\bf 0} & {\bf 0}  \\
%\hline
{\bf 0} & \mathbf{I}_r & \cdots  & {\bf 0} & {\bf 0} & {\bf 0} & {\bf 1} & \cdots  & {\bf 0} & {\bf 0}  \\
%\hline
\vdots & \vdots & \ddots & \vdots  &\vdots & \vdots &  \vdots &\ddots &  \vdots &\vdots  \\
%\hline
{\bf 0} &  {\bf 0} & \cdots  & \mathbf{I}_r & {\bf 0} &{\bf 0} & {\bf 0} & \cdots  & {\bf 1} & {\bf 0} \\
{\bf 0} &  {\bf 0} & \cdots  & {\bf 0} & \mathbf{I}_s & {\bf 0} & {\bf 0} & \cdots  & {\bf 0} & {\bf 1}  \\
\end{array} \right) \; ,
\label{eq:optimal_2_r} 
\end{equation}
where $s= k \mod r$, and recall that ${\bf 1}$ denotes all-one column vector. 

It is easy to see that $\Cc$ supports any query of size $t = 2$. \\
If $r|k$, then $n = \frac{k}{r}(r+1)$, which satisfies the lower bound (\ref{eq:bound2}) with equality.
%If $r|k$, then $n = \frac{k}{r}(r+1)$, which satisfies the lower bound (\ref{eq:bound2}) with equality. 
When $ r\nmid k$, we have $s= k-\left \lfloor \frac{k}{r} \right \rfloor r$, and 
\begin{eqnarray*}
n &=& \left \lfloor \frac{k}{r} \right \rfloor (r+1)+ s+1 = \left \lceil \frac{k}{r} \right \rceil +k \; ,
\end{eqnarray*}
which also satisfies~(\ref{eq:bound2}) with equality.
\medskip

%It is  proven in \cite{VY} that for $t=2,3$,  $\Bc(k,t)=\Pc(k,t)$. 
Since $\Bc(k,r,t) \ge \Pc(k,r,t)$, we can summarize the result as in the following proposition.
\begin{proposition}
\label{pro:t2r}
For any $k$, $t=2$ and $r\ge 2$,  
$$\Bc(k,r,t) = \Pc(k,r,t)=\left \lceil \frac{k}{r} \right \rceil + k \; .$$
\end{proposition}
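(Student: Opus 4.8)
The plan is to prove the proposition by a sandwich argument: establish the lower bound $\Pc(k,r,2) \ge \lceil k/r \rceil + k$, and then exhibit a single explicit batch code of length exactly $\lceil k/r \rceil + k$, which supplies the matching upper bound for both $\Bc$ and $\Pc$ simultaneously. Equality for both quantities then follows by combining these two directions with the monotonicity relation $\Bc(k,r,t) \ge \Pc(k,r,t)$.

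For the lower bound, I would specialize the bound~(\ref{eq:dima1}) to the case $t=2$. Since a systematic linear $(n,k,r,t)$ PIR code is an LRC with locality and availability of the information symbols, the bound~(\ref{eq:dima1}) applies to it; substituting $t=2$ and using $d \ge t = 2$ yields exactly~(\ref{eq:bound2}), that is, $\Pc(k,r,2) \ge \lceil k/r \rceil + k$. Because every batch code is in particular a PIR code, we have $\Bc(k,r,2) \ge \Pc(k,r,2)$, so this lower bound will ultimately bound $\Bc$ from below as well through the chain of inequalities.

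For the upper bound, I would use the code $\Cc$ generated by the matrix $\mathbf{G}$ in~(\ref{eq:optimal_2_r}). The key step is to verify that $\Cc$ supports every query of size $t=2$ with two disjoint recovery sets of size at most $r$. I would split into two cases. If the query is $(x_i, x_j)$ with $i \ne j$, then the singleton systematic coordinates $\{i\}$ and $\{j\}$ are disjoint recovery sets of size $1 \le r$. If the query is the repeated query $(x_i, x_i)$, then the first recovery set is the systematic coordinate $\{i\}$, while the second consists of the parity coordinate of the block containing $x_i$ together with the remaining systematic coordinates of that block; since each parity column is the all-one vector on its block, this second set recovers $x_i$, has size equal to the block size (hence at most $r$), and is disjoint from $\{i\}$. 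Thus $\Cc$ is a valid $(n,k,r,2)$ batch code. The length computation, already carried out in the excerpt for the two cases $r \mid k$ and $r \nmid k$, gives $n = \lceil k/r \rceil + k$ in both, so $\Bc(k,r,2) \le \lceil k/r \rceil + k$; and since $\Cc$ is also a PIR code, $\Pc(k,r,2) \le \lceil k/r \rceil + k$.

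Combining the two directions gives $\lceil k/r \rceil + k \le \Pc(k,r,2) \le \Bc(k,r,2) \le \lceil k/r \rceil + k$, forcing equality throughout and proving the claim. I expect the only delicate point to be the verification that the second recovery set in the repeated-query case is genuinely disjoint from the first and has size at most $r$, including the boundary behaviour of the final block when $s = k \bmod r$ is either zero or strictly positive; the length arithmetic and the lower bound are otherwise routine given the results already established.
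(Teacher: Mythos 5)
Your proposal is correct and follows essentially the same route as the paper: the lower bound comes from specializing~(\ref{eq:dima1}) with $d \ge t = 2$ to obtain~(\ref{eq:bound2}), the upper bound comes from the construction~(\ref{eq:optimal_2_r}) whose length is $\lceil k/r \rceil + k$ in both the $r \mid k$ and $r \nmid k$ cases, and equality follows via $\Bc(k,r,t) \ge \Pc(k,r,t)$. Your explicit verification of the disjoint recovery sets (singletons for distinct indices; the block's parity symbol together with the remaining systematic symbols of that block for a repeated query) is exactly the detail the paper leaves to the reader with ``it is easy to see,'' and it is carried out correctly, including the shorter final block of size $s$.
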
 
\begin{corollary}
For any $k \ge 1$, $t=2$ and $r \ge 2$, the optimal length of a non-systematic batch and PIR code, $\Bc_n(k,r,t)$ and $\Pc_n(k,r,t)$, respectively, satisfies
$$\left \lceil \frac{k}{r} \right \rceil + k \ge \Bc_n(k,r,t) \ge \Pc_n(k,r,t) \ge \left \lceil \frac{k}{2r-1} \right \rceil +k \; .$$
\end{corollary}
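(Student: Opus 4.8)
The plan is to prove the displayed chain of three inequalities one at a time, noting that the two outer comparisons reduce to simple code-inclusion observations while only the rightmost bound requires genuine work. Throughout, $t=2$ is fixed, and $\Bc_n(k,r,t),\Pc_n(k,r,t)$ denote the optimal lengths of \emph{general} (not necessarily systematic) linear batch and PIR codes.

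For the leftmost inequality $\lceil k/r\rceil + k \ge \Bc_n(k,r,t)$, I would observe that a systematic code is a particular linear code, so the optimal length minimized over all linear batch codes is at most the optimal length minimized over systematic ones; that is, $\Bc_n(k,r,t)\le \Bc(k,r,t)$. By Proposition~\ref{pro:t2r} the latter equals $\lceil k/r\rceil + k$, realized explicitly by the systematic construction with generator matrix~(\ref{eq:optimal_2_r}). For the middle inequality $\Bc_n(k,r,t)\ge \Pc_n(k,r,t)$, I would use that any batch code is a PIR code with the same parameters, since a PIR query is just the special batch query $i_1=\cdots=i_t$; hence an optimal non-systematic batch code is in particular a non-systematic PIR code of the same length, and the optimal PIR length cannot exceed it.

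The substance lies in the rightmost inequality $\Pc_n(k,r,t)\ge \lceil k/(2r-1)\rceil + k$. Here I would invoke the general PIR bound~(\ref{nonsyst}), which the excerpt records as valid for (non-systematic) PIR codes. With $t=2$ the maximization runs over $\beta\in\{1,2\}$: the $\beta=1$ term vanishes, while $\beta=2$ contributes $\lceil k/(2r-1)\rceil - 1$, which is nonnegative because $k\ge 1$ and $2r-1\ge 3$. Substituting the minimum-distance estimate $d\ge t=2$ into~(\ref{nonsyst}) then collapses the bound to $n \ge k + 2 + (\lceil k/(2r-1)\rceil - 1) - 1 = k + \lceil k/(2r-1)\rceil$, which is exactly the claimed lower bound.

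The main obstacle will be justifying $d \ge t$ for a \emph{general} linear PIR code, since the excerpt asserts this minimum-distance property only for batch codes. I expect to supply it by a short pigeonhole argument on the generator matrix $\mathbf{G}$. Each recovery set $R$ for the information symbol $x_i$ yields a vector $\mathbf{v}$ supported on $R$ with $x_i = \mathbf{c}\,\mathbf{v}^{T}$ for every codeword $\mathbf{c}=\bldx\mathbf{G}$. If some nonzero codeword $\mathbf{c}$ had weight at most $t-1$, then picking $i$ with $x_i\ne 0$ (possible since $\mathbf{c}\ne 0$ forces $\bldx\ne 0$), the $t$ disjoint recovery sets of $x_i$ cannot all meet the support of $\mathbf{c}$, so at least one recovery set $R$ is disjoint from it; evaluating $x_i=\mathbf{c}\,\mathbf{v}^{T}$ through that set gives $x_i=0$, a contradiction. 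This establishes $d\ge t$ and closes the argument.
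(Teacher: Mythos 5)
Your proof is correct and takes essentially the same route as the paper: the two outer inequalities follow from the systematic construction of Proposition~\ref{pro:t2r} and the fact that every batch code is a PIR code, and the rightmost bound is exactly the paper's own step of substituting $d \ge t = 2$ and $\beta = 2$ into~(\ref{nonsyst}), which the paper has already noted holds for general (non-systematic) PIR codes. Your closing pigeonhole argument that $d \ge t$ holds for arbitrary linear PIR codes is sound and fills in a detail the paper leaves implicit (it only quotes $d \ge t$ for batch codes, though the cited proof uses only repeated-index queries and thus applies verbatim to PIR codes).
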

The right-most inequality is obtained from the bound (\ref{nonsyst}) by substituting $d\ge t=2$ and $\beta = 2$. 
%{There are optimal constructions of $(n,k,r,t)$ LRCs with $t=2$ in \cite{tamobarg} 
%\begin{defn}
%An $(n,k,r,t)$ batch code $\Cc$ with a given $r$ and $t$ is said to be {\it rate optimal} if its rate is
%maximum among all batch codes with the same $r$ and $t$.
%\end{defn}
%From Proposition \ref{pro:t2r}, the constructed systematic batch code with $t=2$ and $r \ge 2$ is rate optimal.
%$$ \frac{k}{n} \le \frac{k}{ \left \lceil \frac{k}{r} \right \rceil +k} \approx \frac{r}{r+1} \; .$$
%In particular, the code $\Cc$ defined in~(\ref{eq:optimal_2_r}) is rate optimal. 

\subsection{Batch codes and PIR codes with $t \ge 2$ and $r=2$}
In this section, we propose a construction of (systematic) binary PIR codes such that
\begin{equation}
\label{eq:restriction-t}
2\le t \le \max\left\{\left \lceil\frac{k}{r}\right\rceil, r\right\}+2 
\end{equation}
and $r=2$. We achieve this by using a generator matrix with columns of weight at most $2$. The constructed codes are batch codes for $t=2,3,4$.  In particular, for $t = 2$ and $r=2$, the construction is identical to the one in the previous section. 

Let $\mathbf{G}$ be a binary generator matrix defined as $\mathbf{G} = [ \, \I_k \, | \, \A \,]$. 
In the sequel we describe how to construct the sub-matrix $\A$.

%============
%When $k$ is even, $\mathbf{A}$ is a $k \times \frac{k}{2}$ matrix with $j$th column 
%$\mathbf{A}_j = [0,\ldots,0, 1,1,0,\ldots, 0]^T$ where the block of 1 of length $2$ is located at positions $2j-1$ and $2j$.
%The last matrix $\mathbf{B}$ is constructed as follows:
%corresponding to each block of 1s in  $\A$  there should be only one entry of 1 in each column of $\B$ and each row of $\B$ has %weight exactly equal to $(t-2)$. 
%It is possible to construct such a matrix $\B$ using at most $(t-2) \frac{k}{2}$ columns. (Since there are ${k \choose 2}$ %columns of weight 2 in total and since $t \le k$, it is always possible to find $(t-2) \frac{k}{2}$ distinct columns as %required.)
%============
When $k$ is even or $t-1$ is even, the sub-matrix $\A$ has all its columns of weight 2 and all its rows of weight $t-1$, such that there is no $1$-square pattern. In other words, for any $i_1, i_2, j_1, j_2$, $i_1 \neq i_2$, $j_1 \neq j_2$, at least one of the entries $A_{i_1,j_1}$, 
$A_{i_1,j_2}$, $A_{i_2,j_1}$ and $A_{i_2,j_2}$ in $\A$ is zero.

The total number of columns in $\mathbf{G}$ is $n= k+ (t-1) \frac{k}{2}$.
In particular, for $r=2$ and $t=2$, this bound coincides with~(\ref{eq:bound2}).
We remark that the above construction requires some modification when $k=2$ and $3$. This is due to the fact that it is 
impossible to avoid 1-squares if the number of rows in $\mathbf{G}$ is two or three (for $k=2$, $t=3,4$, the corresponding values of $n$ are $5$ and $7$, respectively and for $k=3$, $t=3,4$, the values of $n$ are $6$ and $9$ respectively). 
Moreover, the right-hand inequality in~(\ref{eq:restriction-t}) is required in order to fit $t-2$ ones per row, while avoiding 1-squares.

When $k$ is odd and $t-1$ is odd, then $\A$ has all (except the last) columns of weight 2, and the last column of weight 1. All its rows are of weight $t-1$, and there is no $1$-square pattern in $\A$. 
Therefore the total number of columns in $\G$ is 
\[
n = k + \left \lceil (t-1) \cdot \frac{k}{2} \right\rceil \; .  
\]

%we construct $\A$ as follows: add $t-1$ columns of weight 2 with entry 1 at the $k$th row. This $t-1$ columns contribute 1 to %$(t-1)$ rows in $[k-1]$. So the remaining $k-1-(t-1) =k-t$ rows need $\left\lceil \frac{k-t}{2} \right\rceil$ columns to get a
% %second 1, or the second recovery set and those columns will also be added to the $t-1$ columns which are already in $\A$. So % the number of columns in $\A$ is $t-1+ \left\lceil \frac{k-t}{2} \right\rceil$.

%Since the $k$th row do not need 1 any more and therefore $(t-2)\frac{(k-1)}{2}$ additional columns contribute $(t-2)$ recovery %sets to the first $k-1$ symbols, which forms $\B$.

\begin{proposition}
The code $\Cc$ defined by the above generator matrix $\mathbf{G}$ for $t=3$ supports any query of the form $(x_i, x_j, x_\ell)$ with recovery sets of size at most $2$, $i, j, \ell \in [k]$.  
\end{proposition}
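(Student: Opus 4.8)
The plan is to read the parity block $\A$ of $\G = [\,\I_k \mid \A\,]$ as the incidence structure of a graph and then dispatch the claim by a short case analysis on the query multiset. For $t=3$ we have $t-1=2$, which is even, so the construction is in the regime where every column of $\A$ has weight $2$, every row has weight $t-1=2$, and there is no $1$-square. I would interpret each column of $\A$ as an edge joining the two information coordinates indexing its two nonzero rows. The no-$1$-square condition says that no two columns share the same pair of rows, hence there are no parallel edges and the graph $H$ on the vertex set $[k]$ is simple; the uniform row weight $2$ makes $H$ $2$-regular. Consequently $H$ is a disjoint union of cycles of length at least $3$, and every information coordinate $x_i$ has exactly two distinct neighbours.

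I would then enumerate the recovery sets available for a single coordinate $x_i$. Since every parity column has weight $2$, the only possibilities are the size-$1$ systematic set $\{i\}$ and, for each edge $\{i,v\}$ incident to $i$, the size-$2$ set formed by the parity coordinate of that edge together with the systematic coordinate $v$; the latter recovers $x_i = (x_i + x_v) + x_v$ over $\FF_2$. Thus $x_i$ has one recovery set of size $1$ and exactly two of size $2$, the size-$2$ ones being indexed by its two neighbours.

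The proof then splits according to the multiplicities in the multiset $(i,j,\ell)$. If $i,j,\ell$ are pairwise distinct, the systematic singletons $\{i\}, \{j\}, \{\ell\}$ are already three disjoint recovery sets. If exactly two indices coincide, say $i=j\neq\ell$, I would serve one copy of $x_i$ by $\{i\}$, the other by the size-$2$ set of an incident edge $\{i,v\}$ with $v\neq\ell$ --- such a neighbour exists because $i$ has two distinct neighbours and at most one of them equals $\ell$ --- and serve $x_\ell$ by $\{\ell\}$. Finally, if $i=j=\ell$, I would use $\{i\}$ together with the two size-$2$ sets coming from the two incident edges $\{i,v_1\}$ and $\{i,v_2\}$; these occupy the pairwise distinct systematic positions $i, v_1, v_2$ and two distinct parity positions.

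The step deserving the most care is verifying disjointness in the two repeated-index cases, and this is precisely where $2$-regularity and simplicity of $H$ enter: each coordinate has two distinct neighbours, so whenever a second or third recovery set is demanded for the same coordinate there is always enough room to pick incident edges whose systematic endpoints avoid the positions already used. Everything else is a one-line check, so this is the only genuine obstacle.
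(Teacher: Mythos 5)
Your proof is correct. For this proposition the paper itself supplies no argument at all --- the proposition is stated bare, and the only guidance in the text is the remark that ``the absence of $1$-squares is instrumental in finding disjoint recovery sets'' --- so there is no authors' proof to compare against; your write-up is the natural way to make that remark precise. The graph-theoretic reading is exactly right for $t=3$: since $t-1=2$, the matrix $\A$ has $k$ columns of weight $2$ and all rows of weight $2$, and the no-$1$-square condition is precisely the statement that no two columns repeat the same pair of rows, i.e.\ that the $2$-regular multigraph encoded by $\A$ is simple. This gives every information coordinate two \emph{distinct} neighbours $v_1 \neq v_2$, which is the only fact your repeated-index cases need: in the case $i=j\neq \ell$ at least one neighbour avoids $\ell$, and in the case $i=j=\ell$ the three sets $\{i\}$, $\{c_{iv_1},v_1\}$, $\{c_{iv_2},v_2\}$ occupy pairwise distinct positions because parity and systematic coordinates live in different blocks and $i, v_1, v_2$ are pairwise distinct. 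Your enumeration of \emph{all} recovery sets of size at most $2$ over $\FF_2$ is also correct (a sum of two edge functionals has support of even size, so no two parity columns can recover a single $x_i$), though the proof only needs the exhibited sets, not their completeness. One small caveat worth recording: the claim that the graph decomposes into cycles of length at least $3$, and hence that the two neighbours are distinct, presumes $k \ge 3$; this is consistent with the paper, which notes that the construction must be modified when the number of rows is two or three (for $t=3$ the obstruction actually occurs only at $k=2$, where no simple $2$-regular graph exists), so your argument covers exactly the range in which the stated generator matrix is defined.
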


\begin{proposition}
The code $\Cc$ defined by the above generator matrix $\mathbf{G}$ for $t=4$ supports any query of the form $(x_i, x_j, x_\ell, x_h)$ with recovery sets of size at most $2$, $i, j, \ell, h \in [k]$.  
\end{proposition}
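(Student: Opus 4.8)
The plan is to translate the combinatorial structure of $\A$ into a graph and then reduce the batch property to a short case analysis on how the four requested indices repeat. Since every column of $\A$ has weight $2$ (I will treat the odd-$k$ variant, with its single weight-$1$ column, as a minor modification at the end), I identify $\A$ with a graph $H$ on the vertex set $[k]$ whose edges are the columns: the column carrying its two $1$'s in rows $v$ and $w$ becomes the edge $\{v,w\}$, and the associated coded symbol equals $x_v + x_w$. The requirement that each row of $\A$ has weight $t-1 = 3$ makes $H$ cubic, and the no-$1$-square condition is exactly the statement that $H$ is simple, since two columns with their $1$'s in the same pair of rows would form a $1$-square. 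Each information symbol $x_v$ then possesses the systematic recovery set $\{v\}$ of size $1$ together with one size-$2$ recovery set $\{k+e,\, w\}$ for every edge $e=\{v,w\}$ incident to $v$; simplicity of $H$ guarantees that the neighbours $w$ appearing in these sets are pairwise distinct.

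With this dictionary in place, the batch property for $t=4$ amounts to the following: for every multiset $\{v_1,v_2,v_3,v_4\}\subseteq[k]$ there is a choice of one recovery set per requested copy so that the four chosen sets are pairwise disjoint. I would organise the argument by the multiplicity pattern of the request, namely $(1,1,1,1)$, $(2,1,1)$, $(2,2)$, $(3,1)$ and $(4)$. The pattern $(1,1,1,1)$ is immediate, since the four distinct systematic sets $\{v_1\},\dots,\{v_4\}$ are disjoint. The pattern $(4)$ reduces to the PIR property: $x_v$ has exactly the four recovery sets $\{v\}$ and $\{k+e_i,w_i\}$ associated to its three incident edges, and these are pairwise disjoint because the neighbours $w_1,w_2,w_3$ are distinct and the parity coordinates $k+e_i$ are distinct.

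The remaining patterns are where the real work lies, and they are handled by exploiting the slack afforded by the degree $3$ of $H$. For $(3,1)$, say $x_v$ thrice and $x_u$ once with $u\neq v$, I serve $x_u$ by its singleton $\{u\}$ and serve $x_v$ by $\{v\}$ together with two of its three incident edges, choosing those two edges so as to avoid the at most one edge of $v$ that leads to $u$; this is possible because $v$ has three edges and only two are needed. For $(2,1,1)$, I serve the two single requests by their singletons $\{u\},\{s\}$ and serve $x_v$ by $\{v\}$ and one incident edge $\{k+e,w\}$, picking $w$ outside $\{u,s\}$, which is feasible since at most two of the three neighbours of $v$ lie in $\{u,s\}$.

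The case $(2,2)$, with $x_v$ and $x_u$ each requested twice ($u\neq v$), is the main obstacle, and it is here that the cubic structure is used most delicately. I serve each of $x_v,x_u$ by its singleton and one incident edge, so I must choose a neighbour $w$ of $v$ and a neighbour $z$ of $u$ with $w\neq u$, $z\neq v$ and $w\neq z$; once this is done the two parity coordinates and all four coordinates are automatically distinct. Because $H$ is simple, $v$ has at least two neighbours different from $u$ and $u$ has at least two neighbours different from $v$, so I am selecting $w$ and $z$ from two sets each of size at least two, and a simple system-of-distinct-representatives argument shows that $w\neq z$ can always be arranged, after which the edges $\{v,w\}$ and $\{u,z\}$ are necessarily distinct. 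I expect this common-neighbour situation to be the only genuinely constrained subcase. Finally, for odd $k$ the single weight-$1$ column merely supplies one extra size-$1$ recovery set to one symbol, which can only enlarge the available choices; I would verify that it never forces a collision and otherwise reuse the analysis above verbatim.
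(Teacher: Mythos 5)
The paper states this proposition without proof (all propositions in Section~\ref{sec:three} are given bare), so there is no paper argument to compare yours against; judged on its own merits, your proof is essentially correct for the main case. The dictionary you set up is exactly right: for $t=4$ and $k$ even, the row/column-weight conditions and the no-$1$-square condition make $\A$ the incidence matrix of a simple cubic graph $H$ on $[k]$, each $x_v$ has the singleton recovery set $\{v\}$ plus three pairwise disjoint sets $\{k+e,w\}$ for the edges $e=\{v,w\}$ at $v$, and your case analysis over the multiplicity patterns $(1,1,1,1)$, $(2,1,1)$, $(2,2)$, $(3,1)$, $(4)$ exhausts all queries. The $(2,2)$ case is indeed the only delicate one, and your distinct-representatives argument is sound: the candidate neighbour sets $N(v)\setminus\{u\}$ and $N(u)\setminus\{v\}$ each have size at least two, so distinct representatives $w\neq z$ exist, and the constraints $w\neq u$, $z\neq v$ then force the two edges (hence the two parity coordinates) to be distinct.

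The one genuine loose end is the odd-$k$ case, where your claim that the analysis can be reused ``verbatim'' is inaccurate. The weight-$1$ column is not an \emph{extra} recovery set on top of the cubic structure: the rows still have weight $t-1=3$, so the vertex $p$ carrying that column has only \emph{two} incident edges in $H$, and the degree-$3$ counting you rely on fails at $p$. Concretely, in the pattern $(2,1,1)$ with $x_p$ requested twice and $x_u$, $x_s$ once each, both neighbours of $p$ may lie in $\{u,s\}$, in which case no edge-based recovery set for the second copy of $x_p$ survives your selection rule; you must instead serve $x_p$ by $\{p\}$ together with the singleton recovery set given by the weight-$1$ parity coordinate, which is disjoint from everything else. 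The same substitution repairs the patterns $(4)$, $(3,1)$ and $(2,2)$ when $p$ is the repeated index (and when $p$ appears with multiplicity one, its systematic singleton suffices). So your compensating mechanism is the right one and the conclusion stands, but the odd case needs this short explicit re-derivation rather than a verbatim reuse of the degree-$3$ arguments.
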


\begin{proposition}
The code $\Cc$ defined by the above generator matrix $\mathbf{G}$ for general $t \ge 5$ supports any query of size $t$ of the form $(x_i, x_i, \cdots, x_i)$ with recovery sets of size at most $2$.  
\end{proposition}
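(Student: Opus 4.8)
The plan is to exhibit, for an arbitrary index $i \in [k]$, a collection of exactly $t$ pairwise disjoint recovery sets for $x_i$, each of size at most $r = 2$, read off directly from the structure of $\G = [\, \I_k \mid \A \,]$. Since the PIR query has the form $(x_i, x_i, \ldots, x_i)$, it suffices to recover the single symbol $x_i$ from $t$ mutually disjoint sets of coordinates of $\bldy$.

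First I would use the systematic part: because $\G$ is systematic, the $i$-th coordinate of $\bldy$ equals $x_i$, so $R_0 = \{i\}$ is a recovery set of size $1$. Next I would exploit the parity columns. Each column $c$ of $\A$ has weight at most $2$; if $A_{i,c} = 1$ and the other nonzero entry in that column is $A_{j,c} = 1$, then the coordinate of $\bldy$ indexed by this column equals $x_i + x_j$, whence $x_i = (x_i + x_j) + x_j$ can be reconstructed from the two-element set consisting of that parity coordinate together with the systematic coordinate $j$. Since row $i$ of $\A$ has weight $t - 1$, this produces $t - 1$ such recovery sets $R_1, \ldots, R_{t-1}$, and together with $R_0$ we obtain exactly $t$ recovery sets, each of size at most $2 = r$.

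The hard part will be verifying that these $t$ sets are pairwise disjoint, and this is exactly where the no-$1$-square property is needed. The parity coordinates appearing in $R_1, \ldots, R_{t-1}$ are indexed by distinct columns of $\A$, so they are distinct from one another and, lying in the parity part, distinct from the systematic coordinate $i$ used in $R_0$. The only remaining danger is that two of the partner systematic coordinates $j$ coincide, or that some partner equals $i$. The latter cannot occur because the two ones of a weight-$2$ column lie in distinct rows, so every partner satisfies $j \neq i$. For the former, suppose two columns $c_1 \neq c_2$ each carried a one in row $i$ and a one in a common partner row $j$; then $A_{i,c_1}, A_{i,c_2}, A_{j,c_1}, A_{j,c_2}$ would all equal $1$, i.e. a $1$-square, contradicting the construction. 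Hence the partners are pairwise distinct and all $t$ recovery sets are disjoint.

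Finally I would dispose of the edge case in which $\A$ has a weight-$1$ column (the situation when $k$ is odd and $t-1$ is odd). If this column meets row $i$, its parity coordinate equals $x_i$ itself, contributing a recovery set of size $1$ with no partner coordinate; this only helps disjointness and leaves the total count of $t$ sets unchanged. Putting these observations together shows that $x_i$ admits $t$ disjoint recovery sets of size at most $r$, which is precisely the PIR requirement for the query $(x_i, x_i, \ldots, x_i)$, completing the proof.
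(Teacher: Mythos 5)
Your proof is correct and is essentially the intended argument: the paper states this proposition without an explicit proof, but the construction's defining requirements — rows of $\mathbf{A}$ of weight $t-1$, columns of weight at most $2$, and no $1$-square pattern (which the paper itself notes is ``instrumental in finding disjoint recovery sets'') — are used exactly as intended in your argument, namely the systematic coordinate $\{i\}$ plus the $t-1$ parity columns meeting row $i$, with the no-$1$-square property forcing the partner systematic coordinates to be pairwise distinct and different from $i$, giving $t$ pairwise disjoint recovery sets of size at most $2$. Your treatment of the weight-$1$ column in the case of odd $k$ and odd $t-1$ is also the correct way to close the remaining edge case.
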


%$$n = k + (t-1)   +\left\lceil \frac{k-t}{2} \right\rceil + (t-2)\frac{(k-1)}{2}.$$
%For odd $k$, and $t=2$, this reduces to $n= k + \left\lceil \frac{k}{2} \right\rceil$, which is the same as in Proposition %\ref{pro:t2r}. We summarize the above discussion as follows.
\begin{proposition}
\label{pro:tr2}
For $r=2$ and $3\le t \le \max\{\left \lceil\frac{k}{r}\right\rceil, r\}+2$, 
\begin{equation*} 
k+t-2+\left \lceil\frac{t -1)(k-1)+1}{t}\right\rceil 
%k+(t-1)\left \lceil \frac{k}{t} \right \rceil 
\le \Pc(k,r,t) \le k + \left \lceil (t-1) \cdot \frac{k}{2} \right\rceil \; . 
\end{equation*}
\end{proposition}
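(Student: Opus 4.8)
The plan is to establish the two inequalities separately: the lower bound by specializing the general bound~(\ref{eq:dima1}) to $r=2$, and the upper bound directly from the explicit construction presented above together with the three preceding propositions. Neither direction requires new machinery; the work is in correctly assembling the pieces already available.

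First I would prove the lower bound. Since a systematic linear $(n,k,r,t)$ PIR code can be regarded as an LRC with locality of information symbols and availability, the bound~(\ref{eq:dima1}) applies to it. Setting $r=2$ gives $(t-1)(r-1)+1 = t$, so~(\ref{eq:dima1}) reads
\[
d + k + \left\lceil \frac{(t-1)(k-1)+1}{t} \right\rceil - 2 \le n \; .
\]
It then remains only to insert a lower bound on the minimum distance. Having $t$ pairwise disjoint recovery sets for each information symbol means that any pattern of $t-1$ erasures meets at most $t-1$ of these sets, hence leaves at least one recovery set untouched for every symbol; the whole systematic message is thus recoverable from any $t-1$ erasures, so $d \ge t$. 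Substituting $d \ge t$ and rearranging yields $n \ge k + t - 2 + \left\lceil ((t-1)(k-1)+1)/t \right\rceil$, and since this holds for every such code it holds for the optimal one, i.e.\ for $\Pc(k,2,t)$.

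For the upper bound I would exhibit a code of the claimed length, namely the one generated by $\mathbf{G} = [\, \I_k \, | \, \A \,]$ constructed above, which has exactly $n = k + \left\lceil (t-1)\cdot k/2 \right\rceil$ columns. I would briefly check that this single expression covers all cases: when $k$ is even or $t-1$ is even the quantity $(t-1)k/2$ is an integer and $\A$ contributes $(t-1)k/2$ weight-$2$ columns, whereas when $k$ and $t-1$ are both odd the extra weight-$1$ column accounts precisely for the ceiling (a column-weight count gives $(k(t-1)-1)/2$ weight-$2$ columns plus one weight-$1$ column, i.e.\ $\lceil k(t-1)/2\rceil$ columns in $\A$). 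By the three preceding propositions this code supports the required queries for every admissible $t$ with $3 \le t \le \max\{\lceil k/r\rceil, r\}+2$: it is a batch, hence PIR, code for $t=3,4$ and a PIR code for $t \ge 5$ (with the separately described modifications handling $k\in\{2,3\}$). Consequently $\Pc(k,2,t) \le k + \left\lceil (t-1)\cdot k/2 \right\rceil$.

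Both directions are short once the earlier material is granted, so there is no substantive obstacle here; the only points demanding genuine care are the justification of $d \ge t$ in the PIR setting via the disjointness-of-recovery-sets argument, and the bookkeeping that reconciles the even/odd cases of the construction into the single expression $\lceil (t-1)k/2\rceil$. The real difficulty of this line of work lies in verifying that the construction supports the queries (the preceding propositions), which I am permitted to assume.
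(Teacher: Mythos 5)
Your proposal is correct and takes essentially the same route as the paper: the paper obtains the left inequality exactly as you do, by substituting $r=2$ (so that $(t-1)(r-1)+1=t$) together with $d\ge t$ into~(\ref{eq:dima1}), and the right inequality is just the length $n=k+\left\lceil (t-1)\cdot\frac{k}{2}\right\rceil$ of the code generated by $\mathbf{G}=[\,\mathbf{I}_k\,|\,\mathbf{A}\,]$, whose query support is supplied by the three preceding propositions. Your additional details (the disjoint-recovery-sets erasure argument for $d\ge t$, and the parity bookkeeping reconciling the column count with the single ceiling expression) only make explicit what the paper leaves implicit, and you also inherit the paper's own glossing over of the exceptional cases $k\in\{2,3\}$, where the modified constructions can exceed the stated upper bound.
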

The left-most inequality in Proposition \ref{pro:tr2} is obtained from (\ref{eq:dima1}) by substituting $r=2$. 
\begin{proposition}
\label{pro:tr_3_4}
For $r=2$ and $t \in  \{ 3, 4 \}$, 
\begin{equation*} 
k+t-2+\left \lceil\frac{t -1)(k-1)+1}{t}\right\rceil 
%k+(t-1)\left \lceil \frac{k}{t} \right \rceil 
\le \Bc(k,r,t) \le k + \left \lceil (t-1) \cdot \frac{k}{2} \right\rceil \; . 
\end{equation*}
\end{proposition}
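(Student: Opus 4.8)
The plan is to establish the two inequalities independently. The lower bound is obtained exactly as in Proposition~\ref{pro:tr2}: a systematic batch code is in particular a systematic PIR code with availability $t$, so the bound~(\ref{eq:dima1}) applies to it directly (equivalently, any batch code is a PIR code and hence $\Bc(k,r,t) \ge \Pc(k,r,t)$, so the already-proven left-most inequality of Proposition~\ref{pro:tr2} transfers). Concretely, I would substitute $r=2$ into~(\ref{eq:dima1}), observe that the denominator $(t-1)(r-1)+1$ collapses to $t$, and invoke the standing inequality $d \ge t$. This yields
$$k + t - 2 + \left\lceil \frac{(t-1)(k-1)+1}{t} \right\rceil \le \Bc(k,r,t),$$
and this step uses nothing about $t$ being $3$ or $4$; the restriction on $t$ enters only through the upper bound.

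For the upper bound I would use the explicit generator matrix $\mathbf{G} = [\,\I_k \mid \A\,]$ constructed in this section, whose column count is $n = k + \lceil (t-1)\cdot k/2 \rceil$ in both the $k$-even and $k$-odd cases. The point that upgrades this code from a PIR code to a genuine batch code is precisely the pair of propositions already stated for $t=3$ and $t=4$, asserting that $\Cc$ serves every query $(x_i,x_j,x_\ell)$, respectively $(x_i,x_j,x_\ell,x_h)$, by disjoint recovery sets of size at most $r=2$. Granting those propositions, $\Cc$ is a systematic $(n,k,r,t)$ batch code of the claimed length, whence $\Bc(k,r,t) \le k + \lceil (t-1)\cdot k/2 \rceil$.

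The main obstacle, and the reason the statement is confined to $t \in \{3,4\}$, is exactly this batch-query property: for $t \ge 5$ the same matrix is only shown to support repeated queries $(x_i,\dots,x_i)$, so one obtains the PIR bound of Proposition~\ref{pro:tr2} for all $t$ in the admissible range of~(\ref{eq:restriction-t}) but the batch bound only up to $t=4$. I would additionally flag the degenerate cases $k \in \{2,3\}$: with two or three rows a $1$-square cannot be avoided, so the construction must be modified and the closed-form length $k + \lceil (t-1)\cdot k/2 \rceil$ need not hold; for those $k$ the explicit lengths recorded above should be used instead, and the proposition is to be read for $k$ large enough that the $1$-square-free construction exists.
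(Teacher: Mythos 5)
Your proposal is correct and follows essentially the same route as the paper: the left inequality is obtained by substituting $r=2$ (so that $(t-1)(r-1)+1=t$) and $d\ge t$ into the bound~(\ref{eq:dima1}), and the right inequality follows from the explicit construction $\mathbf{G}=[\,\I_k \mid \A\,]$ of length $k+\lceil (t-1)k/2\rceil$, which is a genuine batch code for $t\in\{3,4\}$ by the preceding query-support propositions (and only a PIR code for $t\ge 5$, which is exactly why the batch bound stops at $t=4$). Your additional caveat about $k\in\{2,3\}$, where $1$-squares cannot be avoided and the closed-form length must be replaced by the modified constructions, matches the paper's own remark and the restriction $k>3$ in its summary table.
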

The following examples illustrate the above constructions.
\begin{example}
Consider a binary $(n,k,r,t)$ batch (PIR) code $\Cc$ with $k=5$, $t=3 $ and $r=2$. From Proposition~\ref{pro:tr_3_4}, we have 
$9 \le \Bc(5,2,3) \le 10$. We construct a batch (PIR) code $\Cc$ of length $10$ with the above parameters using the following $5 \times 10$ generator matrix:
$$
\mathbf{G}=
\left (\begin{array}{ccccc}
1 &0 &0 &0 &0     \\
0 &1 &0 &0 &0     \\
0 &0 &1 &0 &0  \\
0 &0 &0 &1 &0     \\
0 &0 &0 &0 &1 
\end{array} \right |
\left. \begin{array}{ccccc}
1  &0  &  0 & 1 & 0   \\
0  &1  &  0 & 0 & 1   \\
0  &0  &  1 & 1 & 0   \\
0  &0  &  1 & 0 & 1   \\
1  &1  &  0 & 0 & 0 
\end{array} \right) \; .
$$
\end{example}
\begin{example}
Take a binary $(n,k,r,t)$ batch (PIR) code $\Cc$ with $k=5$, $t=4 $ and $r=2$. From Proposition~\ref{pro:tr_3_4}, we have 
$11 \le \Bc(5,2,4) \le 13$. We construct a batch (PIR) code $\Cc$ of length $13$ with the above parameters using the following $5 \times 13$ generator matrix:
$$
\mathbf{G}=
\left (\begin{array}{ccccc}
1 &0 &0 &0 &0     \\
0 &1 &0 &0 &0     \\
0 &0 &1 &0 &0  \\
0 &0 &0 &1 &0     \\
0 &0 &0 &0 &1 
\end{array} \right |
\left. 
\begin{array}{cccccccc}
1  &0  &  0 & 1 & 0 & 1 & 0 & 0  \\
0  &1  &  0 & 0 & 1 & 0 & 1 & 0 \\
0  &0  &  1 & 1 & 0 & 0 & 1 & 0 \\
0  &0  &  1 & 0 & 1 & 1 & 0 & 0 \\
1  &1  &  0 & 0 & 0 & 0 & 0 & 1
\end{array} \right ) \; .
$$
\end{example}
%*************************************************************************************************%
%
%  t=3, r \ge 3
%
%************************************************************************************************%
%\section{Batch codes with $t=3$ and $r\ge 3$}

\section{PIR codes for arbitrary $t>2$ and $r>2$}
\subsection{Case $r|k$}
\label{sec:four}
In this section, by using a generator matrix with columns of weight at most $r$, we generalize the construction in Section~\ref{sec:three} to systematic PIR codes with arbitrary 
parameters $2< t\le\max\{\frac{k}{r},r\}+2$ and $r>2$, $r \mid k$. The corresponding upper bounds are implied by the construction.
 
Let $\G$ be a generator matrix of the form $ \mathbf{G} = [\I_k \, | \, \mathbf{A} \, | \, \mathbf{B}]$,
where $\I_k$ is the systematic part. Here, 
$\mathbf{A}$ is a $k \times \frac{k}{r}$ matrix with the $j$th column of the form 
$\mathbf{A}^{[j]} = (a_{1,j}, a_{2,j}, \cdots, a_{k,j})^T$, $1 \le j \le \frac{k}{r}$, where $a_{i,j} = 1$ if $(j-1)r+1 \le i \le jr$, and $a_{i,j} = 0$ otherwise. The matrix $\mathbf{B}$ is defined as follows. Each column has weight $\min\{r,k/r\}$, 
every row in $\B$ has weight $t-2$, and there are no $1$-squares in $[\, \mathbf{A} \, | \, \mathbf{B} \,]$. That is, $\mathbf{B}$ is constructed in such a way that all rows of $\G$ have weight $t$, columns have weight at most $r$, and there are no $1$-squares in $[\, \mathbf{A} \, | \, \mathbf{B} \,]$. The absence of $1$-squares is instrumental in finding disjoint recovery sets for all information symbols.

\begin{itemize}
\item{\bf Case 1: $\frac{k}{r} < r$.} 
In that case we choose $\mathbf{B}$ to be a $k \times (t-2)r$ matrix defined as per the rules above. Then,  
$$n = k + \frac{k}{r} + (t-2)r = (r+1)\frac{k}{r} +(t-2)r \; ,$$
and the code rate is $$\frac{k}{n}= \frac{k}{(r+1)\frac{k}{r} +(t-2)r} < \frac{k}{(r+1)\frac{k}{r} +(t-2)\frac{k}{r}} = \frac{r}{r+t-1} \, ,$$
where the inequality is due to $\frac{k}{r} < r$. 

\item{\bf Case 2: $\frac{k}{r} \ge r$.} 
In this case, we can choose $\mathbf{B}$ as a $k \times (t-2)\frac{k}{r}$ matrix as defined above.
To this end,  $$n = (r+1)\frac{k}{r}+ (t-2)\frac{k}{r} \, , $$ and the rate of the code is 
$$\frac{k}{n} = \frac{k}{(r+1)\frac{k}{r} +(t-2)\frac{k}{r}} = \frac{r}{r+t-1} \; .$$

By a suitable choice of $k$, it is always possible to construct a PIR code (or batch code if $t \in \{2,3,4\}$) achieving the rate $\frac{r}{r+t-1}$, which is close to the optimal rate given in \cite{tamobarg}. Note that the condition $t\le \max\{\frac{k}{r},r\}+2$ is necessary to make sure that no 1-squares are generated.
\end{itemize}

\begin{proposition}
For $r>2$ and $2<t\le \zeta+2$ with $r | k$, 
$$\Pc(k,r,t) \le (r+1)\frac{k}{r}+ (t-2)\zeta$$ where $\zeta = \max\{ \frac{k}{r}, r\}$.
\label{prop:t_3}
\end{proposition}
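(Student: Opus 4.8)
The plan is to establish the claimed upper bound $\Pc(k,r,t) \le (r+1)\frac{k}{r}+ (t-2)\zeta$ by exhibiting an explicit systematic PIR code of exactly this length, which is precisely the construction $\G = [\I_k \,|\, \A \,|\, \B]$ described immediately above the statement. Since $\Pc(k,r,t)$ is defined as the \emph{shortest} length of a linear systematic PIR code with the given parameters, it suffices to verify that the constructed code (a) has the advertised length $n = (r+1)\frac{k}{r} + (t-2)\zeta$, and (b) is a valid $(n,k,r,t)$ PIR code, i.e. every information symbol $x_i$ admits $t$ pairwise-disjoint recovery sets, each of size at most $r$. The length count is the routine part: it just combines the two cases ($\zeta = r$ when $\frac{k}{r}<r$, and $\zeta = \frac{k}{r}$ when $\frac{k}{r}\ge r$) already tabulated in the text, where $\B$ has either $(t-2)r$ or $(t-2)\frac{k}{r}$ columns.

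The substantive part is (b): showing the construction supports all PIR queries $(x_i,\dots,x_i)$. First I would fix an arbitrary information coordinate $x_i$, $i \in [k]$, and produce $t$ disjoint recovery sets of size $\le r$. One set is trivial: the singleton $\{i\}$ coming from the systematic block $\I_k$, so $x_i$ is read off directly with a recovery set of size $1$. A second set comes from the block $\A$: row $i$ of $\A$ has a single $1$ in column $j_0 := \lceil i/r\rceil$, and that column of $\A$ encodes the parity of the $r$ systematic symbols indexed by $\{(j_0-1)r+1,\dots,j_0 r\}$; hence $x_i$ is recoverable from the corresponding coded coordinate together with the other $r-1$ systematic coordinates in that block, giving a recovery set of size exactly $r$. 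The remaining $t-2$ recovery sets come from the $t-2$ ones appearing in row $i$ of $\B$: each such $1$ lies in some column of $\B$ of weight $\min\{r,k/r\}\le r$, and since that column is a parity of at most $r$ systematic symbols (one of which is $x_i$), it yields a recovery set of size at most $r$ by reading the coded coordinate and the other participating systematic coordinates.

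The key remaining point — and the step I expect to be the main obstacle — is verifying \textbf{disjointness} of these $t$ recovery sets, which is exactly where the no-$1$-square hypothesis on $[\,\A\,|\,\B\,]$ is used. The systematic symbols recruited by the $\A$-set are $\{(j_0-1)r+1,\dots,j_0 r\}$, and each $\B$-set recruits the support of one weight-$\le r$ column of $\B$. Two distinct recovery sets of this type collide only if two of the relevant columns share a common systematic index $i' \neq i$ in addition to sharing $x_i$ itself; that is precisely a pair of rows $\{i,i'\}$ and a pair of columns each carrying four $1$'s — a $1$-square. The hypothesis that $[\,\A\,|\,\B\,]$ contains no $1$-square therefore guarantees that beyond their common reliance on $x_i$, no two of these parity columns share any further systematic coordinate, so after assigning $x_i$ to the singleton set $\{i\}$ the remaining $t-1$ parity-based sets can be trimmed (dropping the redundant reference to coordinate $i$ wherever it appears) to obtain $t-1$ pairwise-disjoint sets, each still of size at most $r$. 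Here I would invoke the role of the row-weight bookkeeping — every row of $\G$ has weight exactly $t$ — to confirm that there really are $t-1$ parity symbols available in row $i$, and the condition $t\le\zeta+2$ to confirm that a no-$1$-square assignment with the prescribed row and column weights actually exists (as asserted in the construction). Assembling the singleton set together with these $t-1$ disjoint parity sets yields the required $t$ disjoint recovery sets of size $\le r$, establishing that $\G$ generates a valid $(n,k,r,t)$ PIR code and hence the claimed upper bound on $\Pc(k,r,t)$.
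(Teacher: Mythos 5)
Your proposal is correct and follows essentially the same route as the paper: the bound is obtained by exhibiting the construction $\G = [\,\I_k \,|\, \A \,|\, \B\,]$, counting its length in the two cases $\zeta = r$ and $\zeta = \frac{k}{r}$, and using the no-$1$-square property of $[\,\A \,|\, \B\,]$ to guarantee $t$ disjoint recovery sets (one singleton, one from $\A$, and $t-2$ from $\B$) of size at most $r$ for each query $(x_i,\dots,x_i)$. Your write-up actually spells out the disjointness verification that the paper only asserts via its subsequent (unproved) propositions, while sharing the paper's one remaining implicit assumption, namely that a matrix $\B$ with the prescribed row/column weights and no $1$-squares exists whenever $t \le \zeta + 2$.
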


\begin{proposition}
\label{prop:batch_t_3}
Let $t=3$. The code $\Cc$ defined by the generator matrix $\mathbf{G}$ supports any query of the form $(x_i, x_j, x_\ell)$ with recovery sets of size at most $r$, $i, j, \ell \in [k]$. Therefore it is a batch code.   
\end{proposition}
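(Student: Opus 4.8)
The plan is to exhibit, for every query $(x_i,x_j,x_\ell)$, three pairwise disjoint recovery sets of size at most $r$, by exploiting the three ``copies'' of each information symbol provided by the three blocks $\I_k$, $\A$, $\B$ of $\G$. Fix an information index $i$. Since $r \mid k$, symbol $x_i$ lies in a unique length-$r$ block $\operatorname{bl}(i) = \{(j-1)r+1,\dots,jr\}$ of $\A$ (with $j = \lceil i/r \rceil$), and---because for $t=3$ each row of $\B$ has weight $t-2 = 1$---in a unique column of $\B$ whose support I denote $\operatorname{sp}(i)$, with $|\operatorname{sp}(i)| = \min\{r,k/r\} \le r$. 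This yields three recovery sets for $x_i$: the direct set $R^{\mathrm I}_i = \{i\}$; the set $R^{\mathrm A}_i$ consisting of the $\A$-parity coordinate covering $x_i$ together with the systematic coordinates $\operatorname{bl}(i) \setminus \{i\}$; and the analogous set $R^{\mathrm B}_i$ built from the $\B$-parity coordinate and $\operatorname{sp}(i) \setminus \{i\}$. Their sizes are $1$, $r$ and $\min\{r,k/r\}$, respectively, all at most $r$.

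The single structural fact that drives everything is the absence of $1$-squares in $[\A \mid \B]$: any two of its columns share at most one common $1$-row. Applied to the $\A$-column and the $\B$-column that both cover $x_i$, this forces $\operatorname{bl}(i) \cap \operatorname{sp}(i) = \{i\}$. Consequently $R^{\mathrm A}_i$ and $R^{\mathrm B}_i$ have disjoint systematic footprints (they could only have met at $i$, which both exclude) and their parity coordinates lie in distinct columns; together with $R^{\mathrm I}_i = \{i\}$ this makes $R^{\mathrm I}_i, R^{\mathrm A}_i, R^{\mathrm B}_i$ pairwise disjoint. This immediately settles the repeated query $i = j = \ell$: assign these three sets to the three requested copies of $x_i$.

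It then remains to handle queries with at most two equal indices. When $i,j,\ell$ are pairwise distinct I would simply take the systematic singletons $\{i\}, \{j\}, \{\ell\}$, which are disjoint and of size $1$. The only substantial case is two equal indices, say $(x_i, x_i, x_\ell)$ with $i \ne \ell$: I need two disjoint recovery sets for $x_i$ plus one for $x_\ell$, all mutually disjoint. Here the main obstacle is that the naive choice $R^{\mathrm A}_i, R^{\mathrm B}_i, \{\ell\}$ can fail, because $\ell$ may be a systematic coordinate already used by $R^{\mathrm A}_i$ or $R^{\mathrm B}_i$. I would resolve this by a short case split on the location of $\ell$, again using $\operatorname{bl}(i) \cap \operatorname{sp}(i) = \{i\}$: since $\ell \ne i$, it cannot lie in both $\operatorname{bl}(i)$ and $\operatorname{sp}(i)$. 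If $\ell \in \operatorname{bl}(i)$, take $\{i\}$, $R^{\mathrm B}_i$ and $\{\ell\}$---the no-$1$-square property guarantees $\ell \notin \operatorname{sp}(i)$, so $R^{\mathrm B}_i$ avoids $\ell$; symmetrically, if $\ell \in \operatorname{sp}(i)$ use $\{i\}$, $R^{\mathrm A}_i$ and $\{\ell\}$; and if $\ell$ lies in neither, use $R^{\mathrm A}_i$, $R^{\mathrm B}_i$ and $\{\ell\}$. In every case the three sets are pairwise disjoint and of size at most $r$, which establishes that $\Cc$ is a batch code with recovery sets of size $\le r$ for $t=3$.
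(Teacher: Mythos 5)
Your proof is correct and complete: the three recovery sets for each symbol (the systematic singleton $\{i\}$, the $\A$-block parity set $R^{\mathrm A}_i$, and the $\B$-column parity set $R^{\mathrm B}_i$), the key observation that the no-$1$-square condition in $[\,\A \mid \B\,]$ forces $\operatorname{bl}(i)\cap\operatorname{sp}(i)=\{i\}$, and the case split on the multiplicity pattern of the query (all equal, two equal, all distinct) together handle every multiset query of size $3$ with pairwise disjoint sets of size at most $r$. The paper states this proposition without proof, remarking only that ``the absence of $1$-squares is instrumental in finding disjoint recovery sets for all information symbols,'' and your argument is precisely the one that remark points to, so it matches the intended approach.
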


\begin{proposition}
Let $t=4$. The code $\Cc$ defined by the generator matrix $\mathbf{G}$ supports any query of the form $(x_i, x_j, x_\ell, x_h)$ with recovery sets of size at most $r$, $i, j, \ell, h \in [k]$. Therefore it is a batch code.  
\end{proposition}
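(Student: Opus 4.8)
The plan is to verify that the code $\Cc$ defined by $\G = [\,\I_k \mid \A \mid \B\,]$ supports an arbitrary query $(x_i, x_j, x_\ell, x_h)$ by exhibiting four pairwise-disjoint recovery sets, each of size at most $r$. Recall from the construction that $\G$ has all columns of weight at most $r$, every row of weight exactly $t = 4$, and no $1$-squares in $[\,\A \mid \B\,]$. The weight-$4$ condition means each information symbol $x_i$ is contained in the systematic coordinate $i$ together with exactly three parity columns that have a $1$ in row $i$; this gives each $x_i$ four candidate ways to be recovered --- directly (the singleton $\{i\}$, a recovery set of size $1 \le r$), or indirectly from any one of the three parity coordinates $p$ whose column has $a_{i,p} = 1$, by reading $p$ together with the (at most $r-1$) other systematic coordinates appearing in that column. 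I would first catalogue these four recovery options for a generic symbol.

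Next I would argue the disjointness. The main obstacle is conflict resolution: when the four requested indices $i, j, \ell, h$ are not distinct, or when the parity columns that different requests want to use overlap, naively chosen recovery sets may share coordinates. To handle this I would phase the argument by the multiplicities of the query. If all four indices coincide, say $i = j = \ell = h$, then I use the singleton $\{i\}$ plus the three parity coordinates (each read with its cosymbols) associated to row $i$; these four sets are disjoint precisely because the systematic coordinate $i$ is separate from the parity coordinates, the three parity coordinates are distinct, and the no-$1$-square property guarantees that the cosymbol sets attached to two distinct parity columns containing row $i$ cannot overlap in any second systematic coordinate. For the general case with repeated indices, I would reduce to this by a counting/matching argument: each distinct index needs as many recovery sets as its multiplicity, and since every index has four available options, a request of multiplicity $m \le 4$ can always be served by selecting $m$ of its four options, then resolving cross-request collisions.

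The collision resolution is where the $1$-square-free condition does the real work, and I expect it to be the crux. Two recovery sets collide only if they share a systematic coordinate or a parity coordinate. A parity coordinate is used by at most one request at a time if we assign distinct parity columns to distinct (index, copy) pairs; a shared systematic cosymbol between two parity columns would be exactly a $1$-square, which is forbidden. I would formalize this as a bipartite matching: on one side the (at most four) copies in the query, on the other side the available recovery options, with an edge whenever an option serves a copy and no structural conflict arises. The no-$1$-square hypothesis ensures the conflict graph among options is sparse enough that Hall's condition holds, so a perfect matching --- i.e.\ a valid assignment of four disjoint recovery sets --- exists. I would then conclude, as in Proposition~\ref{prop:batch_t_3} for the $t=3$ case, that since every size-$4$ query is served by disjoint recovery sets of size at most $r$, the code $\Cc$ is a batch code.

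The one point requiring care, and the likely source of case analysis, is the boundary behavior of $\B$: in Case~1 ($\frac{k}{r} < r$) columns of $\B$ have weight $\min\{r, k/r\} = k/r$ rather than $r$, so the cosymbol sets are smaller but the row-weight bookkeeping $t-2$ per row still forces each index to have exactly two parity options from $\B$ plus one from $\A$. I would check that the condition $t \le \max\{\frac{k}{r}, r\} + 2$ (which by construction guarantees the $1$-square-free fill is actually achievable) leaves enough distinct parity options for all four copies simultaneously, and that the size bound $r$ is never exceeded since each parity column contributes a recovery set of size equal to its weight, which is at most $r$ by construction.
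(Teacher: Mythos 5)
Your argument has a genuine gap at the step you yourself identify as the crux. The claim that ``a shared systematic cosymbol between two parity columns would be exactly a $1$-square'' is false: a $1$-square requires two columns to agree in ones on \emph{two} common rows. If a parity column $P$ with a one in row $i$ and a parity column $Q$ with a one in row $j \neq i$ share a single further row $m \notin \{i,j\}$, no $1$-square is created, yet the recovery set for $x_i$ read from $P$ and the recovery set for $x_j$ read from $Q$ both contain coordinate $m$. So $1$-square-freeness only forces disjointness of recovery sets whose parity columns pass through a \emph{common} requested row; it does not make the conflict graph sparse in the way your Hall/matching step assumes, and that step is therefore unjustified. The failure is realizable inside the construction: for $k=12$, $r=3$, $t=4$, with $\A$-blocks $\{1,2,3\}$, $\{4,5,6\}$, $\{7,8,9\}$, $\{10,11,12\}$, take the $\B$-columns with supports $\{1,4,10\}$, $\{1,7,11\}$, $\{2,7,10\}$, $\{2,4,11\}$, $\{3,5,8\}$, $\{3,6,9\}$, $\{5,9,12\}$, $\{6,8,12\}$. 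Every constraint of the construction holds (row weight $2$ in $\B$, column weight $\min\{r,k/r\}=3$, no $1$-squares in $[\,\A \,|\, \B\,]$), yet each of the two $\B$-columns through row $1$ meets each of the two $\B$-columns through row $2$. For the query $(x_1,x_1,x_2,x_2)$, every assignment of the kind your matching produces --- each copy served by an option avoiding the other requested index --- fails: the common $\A$-column $\{1,2,3\}$ is excluded (its recovery set contains the other index), so after the singletons $\{1\}$ and $\{2\}$ the two remaining sets would have to come from the four pairwise-colliding $\B$-columns.

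The proposition is still true (note the paper states it without proof, so there is no official argument to compare against), but a correct proof needs two ingredients absent from your proposal. First, the disjointness of the $\A$-blocks: if $i$ and $j$ lie in different blocks, the query $(x_i,x_i,x_j,x_j)$ is served by $\{i\}$, the $\A$-set of $i$, $\{j\}$, and the $\A$-set of $j$, no matter how the $\B$-columns collide. Second, when $i$ and $j$ share the $\A$-column $T$, the valid assignment is asymmetric: serve $x_i$ by $\{i\}$ together with $T$ --- accepting that this recovery set contains coordinate $j$ --- and serve $x_j$ by its two $\B$-columns; disjointness follows because any two columns that both contain row $j$ (namely $T$ and either $\B$-column of $x_j$, or those two $\B$-columns mutually) can share no other row, by $1$-square-freeness. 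In the example above this is, up to swapping the roles of $1$ and $2$, the \emph{only} solution, so no argument that insists on recovery sets avoiding all other requested indices can succeed. The remaining query types (all four equal; three plus one; two plus two distinct singles; all distinct) do go through by the easier arguments you sketch. A final structural remark: pairwise disjointness is a joint constraint on the whole selection of (copy, option) pairs, not an edge constraint between copies and options, so even with a correct sparsity estimate a plain bipartite matching formulation would not capture it; the proof really is a small case analysis over the multiset type of the query.
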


\begin{proposition}
Let $t \ge 5$. The code $\Cc$ defined by the above generator matrix $\mathbf{G}$ supports any query of size $t$ of the form $(x_i, x_i, \cdots, x_i)$ with recovery sets of size at most $r$. Therefore it is a PIR code.  
\end{proposition}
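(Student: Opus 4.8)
The plan is to produce, for an arbitrary fixed $i \in [k]$, a family of $t$ pairwise disjoint recovery sets for $x_i$, each of size at most $r$, read off directly from the columns of $\G$ whose $i$-th entry is $1$. Since the code is systematic and linear, a set $R \subseteq [n]$ recovers $x_i$ exactly when $\G^{[i]}$ (the $i$-th column of $\I_k$, i.e.\ the $i$-th standard unit vector) lies in the $\FF_2$-span of $\{\G^{[j]} : j \in R\}$. I would first record that, by the construction, row $i$ of $\G$ has weight $t$: one entry in the systematic column $i$, exactly one entry in $\A$ (the block-column covering row $i$), and exactly $t-2$ entries in $\B$. Thus there are $t-1$ columns of $[\,\A\,|\,\B\,]$, say $c_1, \ldots, c_{t-1}$, whose support contains $i$.

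Next I would turn each such column into a recovery set. For a column $c$ with support $S_c \subseteq [k]$ (so $i \in S_c$ and $|S_c| \le r$) I set $R_c = \{c\} \cup (S_c \setminus \{i\})$, reading the elements of $S_c \setminus \{i\}$ as systematic coordinates. Over $\FF_2$ one has $\G^{[c]} = \sum_{\ell \in S_c} \G^{[\ell]}$, hence $\G^{[i]} = \G^{[c]} + \sum_{\ell \in S_c \setminus \{i\}} \G^{[\ell]}$, so $R_c$ recovers $x_i$ and $|R_c| = |S_c| \le r$ (columns of $\A$ have weight $r$, those of $\B$ have weight $\min\{r, k/r\} \le r$). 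Together with the trivial set $R_0 = \{i\}$ of size $1$, this gives the required $t$ candidate sets.

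The crux is pairwise disjointness, and this is exactly where the absence of $1$-squares in $[\,\A\,|\,\B\,]$ is used. The set $R_0 = \{i\}$ meets no $R_c$, since each $R_c$ contributes only systematic coordinates in $S_c \setminus \{i\}$. Distinct columns $c \ne c'$ lie in the non-systematic range, so $R_c$ and $R_{c'}$ can clash only among their systematic coordinates; but if some $\ell \ne i$ lay in both $S_c$ and $S_{c'}$, the entries in rows $i, \ell$ and columns $c, c'$ would form a $1$-square, which is forbidden. Hence $S_c \cap S_{c'} = \{i\}$ and all residual systematic coordinates are disjoint. I expect this disjointness step to be the only real obstacle; the rest is the bookkeeping ensuring that row $i$ carries exactly $t-1$ non-systematic ones and that every column weight is at most $r$. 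Note that, in contrast to the batch statements for $t = 3, 4$, all $t$ queries here ask for the same symbol, so no interaction between recovery sets of different symbols arises and the argument stays short.
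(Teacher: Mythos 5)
Your proof is correct and is exactly the argument the construction is designed around: the paper states this proposition without proof, but its remark that ``the absence of $1$-squares is instrumental in finding disjoint recovery sets'' points to precisely your reasoning. Your three steps --- the systematic set $\{i\}$, one set $\{c\} \cup (S_c \setminus \{i\})$ per non-systematic column $c$ covering row $i$ (giving $t-1$ sets of size at most $r$ since row $i$ of $[\,\A\,|\,\B\,]$ has weight $t-1$), and disjointness of the systematic residues via the no-$1$-square condition --- are all sound, so nothing is missing.
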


\begin{example}
Let $k = 8,~ r=4,~ t=3$ so that $k/r= 2$. Then the following generator matrix $\G$ generates a batch codes of length $n=14$.
$$ 
\G = \left (\begin{array}{cc}
\I_4 & {\bf 0}\\
\hline
{\bf 0} & \I_4
\end{array} \right |
\begin{array}{cc}
{\bf 1}  &{\bf 0} \\
\hline
{\bf 0}  &{\bf 1} 
\end{array} 
\left| \begin{array}{c}
\I_4 \\
\hline 
\I_4
\end{array} \right ) \; .
$$ 
\end{example}
\begin {example}
Let $k = 12,~ r=3,~ t=5$ so that $k/r= 4$. Then the following generator matrix $\G$ generates a PIR code of length $n= 12+ 4+ 3\cdot 4=28$.
$$ 
\G = \left (
\begin{array}{cccc}
 & &  &\\
\I_3 & {\bf 0} &  {\bf 0}& {\bf 0}\\
 & &  & \\
\hline
 & &  & \\
 {\bf 0} & \I_3 &  {\bf 0} & {\bf 0}\\
 & &  & \\
\hline
 & &  & \\
{\bf 0}  &  {\bf 0}& \I_3 & {\bf 0}\\
 & &  & \\
\hline
 & &  & \\
{\bf 0}  &  {\bf 0} & {\bf 0}& \I_3\\

 & &  & 
\end{array} \right |
\begin{array}{cccc}
 & &  & \\
{\bf 1}  &{\bf 0}  & {\bf 0} & {\bf 0}\\
 & &  & \\
\hline
& & &\\
{\bf 0}  &{\bf 1}  & {\bf 0} & {\bf 0}\\
& & & \\

\hline

& & &\\
{\bf 0}  &{\bf 0}  &{\bf  1} & {\bf 0}\\
 & &  & \\
\hline
 & &  & \\
{\bf 0}  &{\bf 0}  &{\bf  0} & {\bf 1}\\
 & &  & 
\end{array} 
\left| \begin{array}{cccc}
1 & 1 &1 &0 \\
0 & 0 &0 &0 \\
0 & 0 &0 &0 \\
\hline 
0 & 1 &0 &0 \\
0 & 0 &1 &0 \\
0 & 0 &0 &1 \\
\hline 
1 & 0 &0 &0 \\
0 & 0 &0 &1 \\
0 & 1 &0 &0 \\
\hline 
0 & 0 &0 &1 \\
0 & 0 &1 &0 \\
1 & 0 &0 &0 \\
\end{array} \right |
\begin{array}{cccc}
0 & 0 &0 &0 \\
1 & 1 &1 &0 \\
0 & 0 &0 &0 \\
\hline 
0 & 1 &0 &0 \\
0 & 0 &1 &0 \\
0 & 0 &0 &1 \\
\hline 
0 & 1 &0 &0 \\
1 & 0 &0 &0 \\
0 & 0 &0 &1 \\
\hline 
0 & 0 &1 &0 \\
0 & 0 &0 &1 \\
1 & 0 &0 &0 \\
\end{array} \left |
\begin{array}{cccc}
0 & 0 &0 &0 \\
0 & 0 &0 &0 \\
1 & 1 &1 &0 \\
\hline 
0 & 1 &0 &0 \\
0 & 0 &1 &0 \\
0 & 0 &0 &1 \\
\hline 
0 & 0 &0 &1 \\
0 & 1 &0 &0 \\
1 & 0 &0 &0 \\
\hline 
1 & 0 &0 &0 \\
0 & 0 &0 &1 \\
0 & 0 &1 &0 \\ \end{array} \right ).
$$
\end{example} 
The rate of the above code is 
\[
\frac{k}{n} =  \frac{12}{28} = \frac{3}{7} = \frac{r}{r+t-1} \; .
\] 
This rate is greater than the rate of the direct product code in \cite{tamobarg}.

\subsection{Case $t=3$ and $r\nmid k$}
In the sequel, we extend the results in the previous subsection towards the case where $t=3$ and $r \nmid k$. 
Let $\mathbf{G}$ be the generator matrix of the block form $ \mathbf{G} = [\I_k \, | \, \mathbf{A} \, | \, \mathbf{B} \, | \, \mathbf{C}]$. 

$\mathbf{A}$ is a $k \times \left \lfloor\frac{k}{r}\right \rfloor$ matrix with the $j$th column of the form 
$\mathbf{A}^{[j]} = (a_{1,j}, a_{2,j}, \cdots, a_{k,j})^T$, $1 \le j \le\left \lfloor\frac{k}{r}\right \rfloor$, where $a_{i,j} = 1$ if $(j-1)r  + 1 \le i \le jr$, and $a_{i,j} = 0$ otherwise. 

Denote $s= k \mod r$. The matrix $\B$ is a $k \times (s+1)$ block matrix, defined as 
$\B = \left[ \B_1^T | \B_2^T \right]^T$, 
where $\B_1$ is $(k-s) \times (s+1)$ and $\B_2$ is an $s \times (s+1)$ matrix  $[{\bf 1} \, | \, \I_s]$. 

We take $\tau \triangleq \min\left\{ r-s, \left \lfloor \frac{k}{r} \right \rfloor \right\}$. 
The first column of $\B_1$, $\B_1^{[1]}$, has $\tau$ entries 1, each entry appears in a different block of rows $[(j-1)r +1, jr]$
for $j = 1, 2, \cdots, \left \lfloor\frac{k}{r}\right \rfloor$. 
We take also $\eta \triangleq \min\left\{ r-1, \left \lfloor \frac{k}{r} \right \rfloor \right\}$.
The columns $\B_1^{[2]}, \B_1^{[3]}, \cdots, \B_1^{[s+1]}$,
all have $\eta$ entries 1, each entry appears in a different block of rows. Additionally, every row in $\B_1$
contains at most one non-zero entry. 

We observe that every column in $\B$ has at most $r$ ones.  
Denote $\gamma \triangleq \min \left \{r, \left\lfloor \frac{k}{r}\right \rfloor \right\}.$
The matrix $\C$ is constructed according to the following rules:
\begin{itemize} 
\item
Each column in $\C$ has $\gamma$ ones (except possibly for the last column);
\item
The last $s$ rows in $\C$ are zeros;
\item
Each row in $[\, \mathbf{A} \, | \, \mathbf{B} \, | \, \mathbf{C} \,]$ has two ones; 
\item
There are no $1$-squares in $[\, \mathbf{B} \, | \, \mathbf{C} \,]$.
\end{itemize}
%That is, after matrices $I_k, \A$ and $\B$ are constructed, only $(k-s) - \tau - \eta .s$  information symbols from positions %$1, \ldots, \lfloor k/r \rfloor$ need to get a 3rd recovery set.\\
%($ \tau = \mbox{ Weight }(\B_1) -s = \min \{r-s, \left \lfloor \frac{k}{r} \right \rfloor \} $ and $\eta= \mbox{ Weight }(\B_j) %-1 = \min\{r-1, \left \lfloor \frac{k}{r} \right \rfloor \}$)
%Therefore, $\mathbf{C}$ is constructed similarly as $\B$ except that the last $s$ entries of columns of $\mathbf{C}$ are zeros.

Next, we estimate the total number of columns in $\mathbf{G}$. The number of ones in the first $k-s$ positions of $\B^{[1]}$ is 
$\tau$. The number of ones in the first $k-s$ positions of each of $\B^{[2]}, \B^{[3]}, \cdots, \B^{[s+1]}$ is 
$\eta$. Since there are two ones in each of the first $k-s$ rows of $[\, \A \, | \, \B \, | \, \C \, ]$, the total number of columns in $\G$ is 
$$
n = (r+1) \left \lfloor \frac{k}{r} \right \rfloor +2s+1 +\left \lceil  \frac{ (k-s) -\tau -\eta \cdot s }{\gamma}\right \rceil \; . $$
\begin{proposition} For $t=3$ and $r\ge 3$,
\begin{equation*}
k+1+\left \lceil\frac{2k-1}{2r-1} \right \rceil
 \le \Bc(k,r,t) \le \\ 
\begin{cases}
(r+1)\frac{k}{r} + \zeta & \mbox{ if } r|k \\
(r+1) \left \lfloor \frac{k}{r} \right \rfloor +2s+1 +\left \lceil  \frac{ (k-s) -\tau -\eta \cdot s }{\gamma}\right \rceil & 
\mbox{ if } r \nmid k
\end{cases} 
\end{equation*}
where $\zeta = \max\{ \frac{k}{r}, r\}$.
\end{proposition}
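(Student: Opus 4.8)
The plan is to treat the two inequalities separately, deriving the left-hand one by specializing the bound~(\ref{eq:dima1}) and the right-hand one from the two explicit constructions. For the lower bound I would substitute $t=3$ and use $d \ge t = 3$, which holds for every batch code: with $t=3$ the numerator $(t-1)(k-1)+1$ becomes $2k-1$ and the denominator $(t-1)(r-1)+1$ becomes $2r-1$, so $d + k + \lceil (2k-1)/(2r-1) \rceil - 2$ evaluates to $k + 1 + \lceil (2k-1)/(2r-1) \rceil$. Since the optimal systematic batch code must obey~(\ref{eq:dima1}), this gives $\Bc(k,r,3) \ge k + 1 + \lceil (2k-1)/(2r-1) \rceil$.

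For the upper bound when $r \mid k$, I would invoke the construction of Section~\ref{sec:four}: its length is $(r+1)\frac{k}{r} + (t-2)\zeta$, equal to $(r+1)\frac{k}{r} + \zeta$ for $t=3$ with $\zeta = \max\{k/r, r\}$, and Proposition~\ref{prop:batch_t_3} already certifies it is a batch code with recovery sets of size at most $r$; hence $\Bc(k,r,3)$ is at most this length. When $r \nmid k$, the length $n = (r+1)\lfloor k/r\rfloor + 2s + 1 + \lceil ((k-s)-\tau-\eta s)/\gamma \rceil$ of $\G = [\I_k \mid \A \mid \B \mid \C]$ was already counted in the paragraph preceding the statement, so the only task left is to verify that $\G$ is a batch code for $t=3$.

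To establish the batch property I would use that each row of $[\A \mid \B \mid \C]$ carries exactly two ones, so every symbol $x_m$ is represented by its systematic coordinate $m$ and precisely two parity coordinates, and that a recovery set drawn from a parity check of weight $w \le r$ has size $w \le r$. For a query with three \emph{distinct} indices the three systematic coordinates already furnish disjoint singleton recovery sets, so this case is immediate; the content lies in repeated indices, the extreme being $(x_m, x_m, x_m)$, which forces three pairwise disjoint recovery sets for one symbol, namely $\{m\}$ and the two parity sets. These two parity sets are disjoint exactly when the two parity checks of $x_m$ meet only in row $m$, which I would deduce from the no-$1$-square condition in $[\B \mid \C]$, from the placement of the ones of each column of $\B_1$ (and of $\C$) in distinct blocks of rows, and from the weight bounds imposed by $\tau = \min\{r-s, \lfloor k/r\rfloor\}$, $\eta = \min\{r-1, \lfloor k/r\rfloor\}$ and $\gamma = \min\{r, \lfloor k/r\rfloor\}$. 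Mixed cases such as $(x_m, x_m, x_j)$ are then handled by combining one direct read, one parity set, and a further direct or parity set while keeping the three disjoint.

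The hard part will be the \emph{tail} of $s$ symbols carried by $\B_2 = [{\bf 1} \mid \I_s]$ rather than by $\A$: both parities of such a symbol live in $\B$, namely the all-ones column $\B^{[1]}$ and one identity column. The choice $\tau \le r - s$ keeps $\B^{[1]}$ of weight at most $r$ after its $s$ tail entries, $\eta \le r-1$ does the same for the identity columns, and the rule that each row of $\B_1$ carries at most one non-zero entry forces the two columns to overlap only in row $m$. Checking that these counts simultaneously cover every tail symbol with two disjoint parities, avoid all $1$-squares, and never breach column weight $r$---across the overlapping-block and repeated-index subcases---is where most of the bookkeeping lives, and is the step I expect to be the main obstacle.
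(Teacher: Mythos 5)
Your proposal follows essentially the same route as the paper: the lower bound by substituting $t=3$ and $d \ge t$ into~(\ref{eq:dima1}), and the upper bound by citing the lengths of the two constructions --- Proposition~\ref{prop:t_3} together with Proposition~\ref{prop:batch_t_3} for $r \mid k$, and the $[\I_k \,|\, \A \,|\, \B \,|\, \C]$ construction with its already-computed length for $r \nmid k$. The batch-property verification you flag as the remaining bookkeeping is exactly what the paper also leaves unproven (it is asserted as a standalone proposition without proof), and your sketch of it --- disjointness of the two parity recovery sets via the no-$1$-square rule, the at-most-one-nonzero-per-row rule in $\B_1$, and the weight caps $\tau + s \le r$ and $\eta + 1 \le r$ --- is consistent with how that verification would go.
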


\begin{proposition}
The code $\Cc$ defined in this section by the generator matrix $\mathbf{G}$ supports any query of the form $(x_i, x_j, x_\ell)$ with recovery sets of size at most $r$, $i, j, \ell \in [k]$.  
\end{proposition}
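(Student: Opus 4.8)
The plan is to show that for any query triple $(x_i, x_j, x_\ell)$ with $i, j, \ell \in [k]$, we can find three pairwise disjoint recovery sets of size at most $r$, one for each requested symbol. The argument should proceed by a case analysis on how many of the three indices coincide, mirroring the structure already used for the $r \mid k$ case in Proposition~\ref{prop:batch_t_3}, but now accounting for the extra parity blocks $\B_2$ and the auxiliary columns of $\C$ created by the condition $r \nmid k$. First I would dispose of the case where all three indices are distinct and lie in \emph{different} blocks of rows $[(j-1)r+1, jr]$: here each $x_{i_m}$ admits the trivial singleton recovery set $\{i_m\}$ from the systematic part $\I_k$, and these are disjoint, so recovery sets of size $1 \le r$ suffice. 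The content of the proof lies in the cases where two or three of the indices collide, or where distinct indices share a block, forcing us to route recovery through the redundancy columns in $[\, \A \, | \, \B \, | \, \C \,]$.

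Next I would handle collisions. Each information symbol $x_i$ is covered, besides its systematic coordinate, by the $\A$-column of its own block (a block-sum relation) and, because every row of $[\, \A \, | \, \B \, | \, \C \,]$ has weight two, by exactly one further nonzero entry located in $\B$ or $\C$. This gives each symbol two independent linear relations beyond its systematic copy, hence in principle three disjoint recovery sets: the singleton $\{i\}$, a set built from the $\A$-column of the block together with the other systematic coordinates of that block, and a set built from the $\B$- or $\C$-column containing the row-$i$ one together with the remaining systematic coordinates meeting that column. Since each such parity column has weight at most $r$, every one of these derived recovery sets has size at most $r$. The key point to verify is \emph{disjointness}: the systematic coordinates used in the $\A$-based set are exactly the block of $x_i$, while those used in the $\B/\C$-based set are governed by the no-$1$-square condition in $[\, \mathbf{B} \, | \, \mathbf{C} \,]$, which guarantees that two distinct rows share at most one common redundancy column and thus prevents overlap of the supporting systematic indices across the two derived sets.

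I expect the main obstacle to be the bookkeeping forced by the irregular blocks $\B_1$, $\B_2$, and the last $s$ rows. The matrix $\B$ was tailored so that every column has at most $r$ ones and every row of $\B_1$ carries at most one nonzero entry, and $\B_2 = [\, {\bf 1} \, | \, \I_s \,]$ treats the final $s$ coordinates specially; one must check that the symbols indexed by the last $s$ rows, which have zero entries throughout $\C$, still receive a full second relation from $\B_2$, and that the column weights $\tau$, $\eta$, $\gamma$ introduced to balance the construction never cause a recovery set to exceed size $r$ or to collide with another. The cleanest way to organize this is to argue that the absence of $1$-squares in $[\, \mathbf{B} \, | \, \mathbf{C} \,]$ (together with the analogous structure of $\A$) means that, after selecting one redundancy column per requested symbol, the chosen parity relations pairwise share no systematic coordinate, so Lemma~\ref{cor:sets} applies to certify that each symbol is uniquely determined from its recovery set independently of the others. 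Assembling the three disjoint sets in each case then completes the proof that $\Cc$ is a batch code for $t=3$.
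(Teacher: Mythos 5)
The paper states this proposition without any proof, so there is no paper argument to compare against line by line; judged on its own merits, your proposal has the right overall plan (singletons for distinct indices, parity columns for repeated ones), but it leaves genuine gaps at exactly the points that make the $r \nmid k$ case different from Proposition~\ref{prop:batch_t_3}. The central factual error is your claim that every symbol is covered by ``the $\A$-column of its own block'': the last $s$ rows have no ones in $\A$ at all (and none in $\C$ either), so both of their non-systematic recovery sets must come from $\B_2 = [\,{\bf 1} \mid \I_s\,]$, namely from $\B^{[1]}$ and from one column of the $\I_s$ part. You flag this as something ``one must check'' but never check it, and the check is not routine: $\B^{[1]}$ has ones in \emph{all} of the last $s$ rows, so the $\B^{[1]}$-based recovery set of such an $x_i$ contains every other last-$s$ systematic coordinate. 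Hence for a query $(x_i,x_i,x_j)$ with both $i,j$ among the last $s$ indices you are forced to choose the $\I_s$-column set for the repeated symbol; making that selection, and verifying the set sizes ($\tau + s \le r$ and $\eta + 1 \le r$) together with the disjointness supplied by ``every row of $\B_1$ has at most one nonzero entry,'' is the actual content of the proof for these coordinates, and it is absent.

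Second, your disjointness argument rests on the wrong condition. What makes the $\A$-based set of $x_i$ disjoint from its $\B$/$\C$-based set is that every column of $[\,\B \mid \C\,]$ meets each block of rows $[(j-1)r+1,\, jr]$ in at most one position (the ``each entry appears in a different block of rows'' rule), not the no-$1$-square condition inside $[\,\B \mid \C\,]$, which constrains pairs of columns of $[\,\B \mid \C\,]$ against each other. Moreover, even under the no-$1$-square condition two columns may share one common row, so your assertion that the chosen parity relations ``pairwise share no systematic coordinate'' is not literally true; the argument must use that the only shared row is the queried coordinate itself, which is excluded from both recovery sets. Finally, Lemma~\ref{cor:sets} plays no role here: it is a property of codes already known to be batch codes (the paper uses it in Section~\ref{sec:ab} for the dimension bound), whereas recovery within each set is simply the elementary identity expressing $x_i$ as the parity coordinate minus the other systematic coordinates in that column's support. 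A complete proof is the explicit case analysis (all indices distinct, exactly two equal, all three equal, each split into $i \le k-s$ and $i > k-s$) with these verifications; note also that three distinct indices always admit three disjoint singletons from the systematic part, so ``distinct indices sharing a block'' is not actually a hard case.
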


\begin{example}
Let $k = 11,~ r=3,~ t=3$ so that $\lfloor k/r\rfloor= 3$ and $s=2$. Then the following generator matrix $\G$ generates a batch code of length $n= 19$.

$$ 
\G = \left (\begin{array}{cccc}
 & & & \\
\I_3 & {\bf 0} &  {\bf 0} &{\bf 0}\\
 & &  & \\
\hline
 & &  & \\
 {\bf 0} & \I_3 &  {\bf 0}&{\bf 0}\\
 & &   &\\
\hline
 & & &  \\
{\bf 0}  &  {\bf 0}& \I_3 &{\bf 0}\\
 & &  & \\
\hline
& & &\\
{\bf 0}  &  {\bf 0} & {\bf 0}& \I_2
 \end{array} \right |
\begin{array}{ccc}
 & &   \\
{\bf 1}  &{\bf 0}  & {\bf 0} \\
 & &   \\
\hline
& & \\
{\bf 0}  &{\bf 1}  & {\bf 0} \\
& &  \\
\hline
& &\\
{\bf 0}  &{\bf 0}  &{\bf  1}\\
 & &   \\
\hline
 & &   \\
{\bf 0}  &{\bf 0}  &{\bf  0} \\
\end{array}
\left |
\begin{array}{c}
 1   \\
 0   \\
 0   \\
\hline 
 0  \\
 0  \\
 0  \\
\hline 
 0  \\
 0  \\
 0  \\
\hline 
  \\
 {\bf 1}
 \end{array} 
\right|
\begin{array}{cc}
 0 &0  \\
 0 &1  \\
 0 &0  \\
\hline 
 1 &0  \\
 0 &1  \\
 0 &0  \\
\hline 
 1 &0  \\
 0 &0  \\
 0 &0  \\
\hline 
  & \\
  & \I_2
 \end{array} 
\left| \begin{array}{cc}
0 & 0 \\
0 & 0  \\
1 & 0  \\
\hline 
0 & 0  \\
0 & 0  \\
1 & 0  \\
\hline 
0 & 0  \\
0 & 1  \\
1 & 0  \\
\hline 
0 & 0  \\
0 & 0  
\end{array} \right ) \; .
$$
\end{example}
We summarize the bounds on $\Bc(k,r,t)$ and $\Pc(k,r,t)$ for $t \in \{2,3,4\}$ and for $r \ge2 $ in Tables~\ref{table:results_lower} and~\ref{table:results_upper}.

\begin{table*}[htb]
\begin{center}
\scalemath{1.0}{{\begin{tabular}{c| ccc} % creating eight columns
\hline 
& $t = 2$ & $t = 3$ & $t = 4$ \\
\hline
%$r = 2$ & \; $k + \left \lceil k/2 \right\rceil$ \; & \; $k + \left \lceil 7k/8 \right\rceil$ \; & \; $2k + \left \lceil 3k/16 \right\rceil$ \\\, 
$r \ge 2$ \; & \; $k + \left \lceil k/r \right\rceil$ \; & \; $k+1+\left \lceil\frac{2k-1}{2r-1} \right \rceil$ \; & \; $k+2+\left \lceil\frac{3k-2}{3r-2} \right \rceil$ \\
\hline
\end{tabular}}}
\end{center}
\caption{Lower bounds for $\Bc(k,r,t)$ and $\Pc(k,r,t)$}
\label{table:results_lower}
\vspace{-.25cm}
\end{table*}
%==================
\begin{table*}[htb]
\begin{center}
\scalemath{0.92}{\begin{tabular}{c| ccc} % creating eight columns
\hline 
& $t = 2$ & $t = 3$ & $t = 4$ \\
\hline
$r=2$, $k>3$ & $k + \left \lceil k/2 \right\rceil$ & $2k$ & $k + \left \lceil 3k/2 \right\rceil$ \\
$r\ge 3$ & $k + \left \lceil \frac{k}{r} \right\rceil$ & $\begin{cases}
(r+1)\frac{k}{r} + \zeta  & \mbox{ if } r|k \\
(r+1) \left \lfloor \frac{k}{r} \right \rfloor +2s+1 +\left \lceil  \frac{ (k-s) -\tau -\eta \cdot s }{\gamma}\right \rceil & \mbox{ if } r \nmid k 
\end{cases}$
 & $(r+1)\frac{k}{r} + 2\zeta, \mbox{ if } r|k $ \\
\hline
\end{tabular}}
\end{center}
\caption{Upper bounds for $\Bc(k,r,t)$ and $\Pc(k,r,t)$}
\label{table:results_upper}
\vspace{-.5cm}
\end{table*}

%We also have bounds for $\Pc(k,r,t)$ for $t \ge 4, r =2$ from Proposition \ref{pro:tr2}.
%*************************************************************************%
%
% Bounds
%
%*************************************************************************%
\section{Bounds on the dimension of a batch code}
\label{sec:ab}
Let $k_q^{\mbox{\footnotesize {\sf opt}}}(n,d)$ denote the largest possible dimension of a linear code of length $n$ and minimum distance $d$, for a given alphabet $\Qc$ of size $q$. More formally, by following on the notations in~\cite{CM}, denote:
$$k_q^{\mbox{\footnotesize {\sf opt}}}(n,d) \triangleq \max\frac{\log |\Cc|}{\log q} \; , $$ 
where the maximum is taken over all possible linear codes $\Cc$ of length $n$ with minimum distance $d$.

Let $\Ic \subseteq [n]$ be a set of coordinates. 
Define 
$$
H(\Ic)= \frac{\log|\{\bldx_{\Ic}: \bldx \in \Cc\}|}{\log q} \; .
$$ 
The following result appears as Lemma 2 in~\cite{CM}.
\begin{lemma}
Consider an $[n, k, d]$ code over $\Qc$ where there exists a set $\Ic \subseteq [n]$ such that $H(\Ic) \le m$. Then, 
there exists an $[n - |\Ic|, (k-m)^+, d]$ code over $\Qc$, where the $+$ symbol denotes the dimension is at least $k-m$.
\label{lemma:reduction-k}
\end{lemma}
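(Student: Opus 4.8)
The plan is to obtain the desired shorter code by \emph{shortening} $\Cc$ on the coordinate set $\Ic$: I would restrict attention to those codewords that vanish on $\Ic$ and then delete the (now identically zero) coordinates indexed by $\Ic$. The whole argument hinges on relating the quantity $H(\Ic)$ to the dimension of a coordinate projection, after which rank--nullity delivers the dimension count.

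First I would introduce the coordinate projection $\pi : \Cc \to \Qc^{|\Ic|}$, $\bldx \mapsto \bldx_\Ic$. Since $\Cc$ is linear over the field $\Qc$, the image $\pi(\Cc) = \{\bldx_\Ic : \bldx \in \Cc\}$ is a $\Qc$-subspace, so its cardinality equals $q^{\dim \pi(\Cc)}$; comparing with the definition of $H(\Ic)$ yields $\dim \pi(\Cc) = H(\Ic) \le m$. The kernel of $\pi$ is exactly the shortened code
\[
\Cc_0 = \{ \bldx \in \Cc : \bldx_\Ic = {\bf 0} \} \; ,
\]
and hence the rank--nullity theorem gives $\dim \Cc_0 = k - H(\Ic) \ge k - m$.

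Next I would delete the coordinates indexed by $\Ic$ from every codeword of $\Cc_0$, obtaining a code $\Cc'$ of length $n - |\Ic|$. Because all codewords of $\Cc_0$ are zero on $\Ic$, this deletion is an injective $\Qc$-linear map on $\Cc_0$, so $\dim \Cc' = \dim \Cc_0 \ge k - m$; when $k - m \le 0$ the claim is vacuous, which is exactly what the $(k-m)^+$ notation records. For the minimum distance, I would note that the support of every codeword of $\Cc_0$ is contained in $[n] \setminus \Ic$, so deleting the $\Ic$-coordinates preserves Hamming weights exactly, whence the minimum distance of $\Cc'$ equals that of $\Cc_0$. Since $\Cc_0$ is a subcode of $\Cc$, each of its nonzero codewords has weight at least $d$, and therefore $\Cc'$ has minimum distance at least $d$.

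The step I expect to require the most care is the identification $\dim \pi(\Cc) = H(\Ic)$: it is precisely here that the linearity of $\Cc$ over the field $\Qc$ is essential, guaranteeing both that $\pi(\Cc)$ is a subspace (so that $|\pi(\Cc)| = q^{\dim \pi(\Cc)}$ and $H(\Ic)$ is genuinely a dimension) and that rank--nullity applies to compute $\dim \Cc_0$. Once this identification is secured, the preservation of the distance under shortening and the dimension bound follow directly.
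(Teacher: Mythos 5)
Your proof is correct, but it takes a genuinely different route from the paper's. The paper's own argument (following Lemma 2 of \cite{CM}) is an averaging/pigeonhole argument over the fibers of the projection onto $\Ic$: for each prefix $Z \in \{\bldx_\Ic : \bldx \in \Cc\}$ one forms the punctured codebook $\Cc(Z)$ of length $n-|\Ic|$ consisting of the restrictions to $[n]\setminus\Ic$ of all codewords agreeing with $Z$ on $\Ic$; each such codebook inherits minimum distance $d$ (two of its words extend to codewords of $\Cc$ sharing the prefix $Z$), and since $\sum_Z |\Cc(Z)| = |\Cc| = q^k$ while the number of prefixes is $q^{H(\Ic)} \le q^m$, some fiber has size at least $q^{k-m}$. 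Your argument instead exploits linearity and, in effect, singles out the $Z={\bf 0}$ fiber: the projection $\pi(\bldx)=\bldx_\Ic$ is $\Qc$-linear, its image is a subspace of dimension exactly $H(\Ic)\le m$, rank--nullity gives $\dim\ker\pi \ge k-m$, and shortening on $\Ic$ preserves weights and hence distance. What each buys: the paper's averaging argument needs no linearity at all (with $H(\Ic)$ read as log-cardinality it proves the statement for arbitrary codes, which is the generality in which \cite{CM} states it), whereas your shortening argument is shorter and produces a genuinely \emph{linear} $[n-|\Ic|,(k-m)^+,d]$ code. The latter is actually a point in your favor in this paper's context: $k_q^{\mbox{\footnotesize {\sf opt}}}$ is defined here as a maximum over \emph{linear} codes, and the fiber $\Cc(\hat{Z})$ produced by the averaging argument is in general only a coset projection, hence not linear when $\hat{Z}\neq{\bf 0}$; for linear $\Cc$ all fibers are translates of $\ker\pi$ and have equal size, so your version both closes that small mismatch and shows the averaging step is trivially an equality in the linear case. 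One shared loose end, common to both proofs: when $k-m\le 0$, or when the shortened code is trivial, "minimum distance $d$" must be read with the usual convention, which is exactly what the $(k-m)^+$ notation is meant to absorb.
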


Cadambe and Mazumdar show in~\cite{CM} that, for any $r$-locally recoverable $(n, k, d)$ code over the alphabet $\mathcal{Q}$, it holds: 
\begin{equation}
k \le \min_{t}\left[tr + k_q^{\mbox{\footnotesize {\sf opt}}}(n-t(r+1), d)\right] \; . 
\end{equation}

By using similar techniques, in the sequel we show a bound on $k$ 
for a linear $(n,k,r,t)$ batch code. We restrict our discussion to linear codes only. 
%========================================
\begin{proposition}
\label{cadambe}
Let $\Cc$ be a linear $(n,k,r,t)$ batch code over an alphabet $\mathcal{Q}$ of size $q$ with minimum distance $d$, and $n - tr \ge d$. Then, 
\begin{equation}
 k \le tr - (t-1) + k_q^{\mbox{\footnotesize {\sf opt}}}(n-tr, d) \; . 
\label{eq:bound-k}
\end{equation}
\end{proposition}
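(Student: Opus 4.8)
The plan is to mimic the Cadambe--Mazumdar argument: locate a single set of coordinates that is simultaneously large (so that deleting it shortens the code by $tr$) and of small entropy (so that little dimension is lost), and then invoke the reduction Lemma~\ref{lemma:reduction-k}. The whole argument reduces the batch code to an ordinary $[n-tr,\cdot,d]$ code whose dimension is bounded by $k_q^{\mbox{\footnotesize {\sf opt}}}(n-tr,d)$ by definition.

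First I would fix an arbitrary information symbol, say $x_1$. Since $\Cc$ is a batch code it must support the query $(x_1, x_1, \ldots, x_1)$, so there exist $t$ pairwise disjoint recovery sets $R_1, R_2, \ldots, R_t \subseteq [n]$ for $x_1$, each of size at most $r$. Set $\Ic_0 = R_1 \cup \cdots \cup R_t$; disjointness gives $|\Ic_0| \le tr$. The next step is to bound $H(\Ic_0)$, and here I would apply Lemma~\ref{cor:sets}, which guarantees indices $a_2 \in R_2, \ldots, a_t \in R_t$ such that the coordinates $a_2, \ldots, a_t$ are uniquely determined once all remaining coordinates of $\Ic_0$ are fixed. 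Consequently the projection $\bldx_{\Ic_0}$ is a function of $\bldx_{\Ic_0 \setminus \{a_2, \ldots, a_t\}}$, so $|\{\bldx_{\Ic_0} : \bldx \in \Cc\}| \le q^{|\Ic_0| - (t-1)}$, whence $H(\Ic_0) \le |\Ic_0| - (t-1)$.

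To make the deletion remove exactly $tr$ coordinates (so the residual length is $n - tr$), I would enlarge $\Ic_0$ to a set $\Ic$ of size exactly $tr$ by adjoining $tr - |\Ic_0|$ further coordinates of $[n]$; this is possible because $n > tr$, which follows from the hypothesis $n - tr \ge d \ge 1$. Each adjoined coordinate raises the entropy by at most one, so $H(\Ic) \le H(\Ic_0) + (tr - |\Ic_0|) \le tr - (t-1)$. Applying Lemma~\ref{lemma:reduction-k} with $m = tr - (t-1)$ then yields an $[n - tr, (k - tr + t - 1)^+, d]$ code over $\Qc$. The hypothesis $n - tr \ge d$ ensures that $k_q^{\mbox{\footnotesize {\sf opt}}}(n - tr, d)$ is well defined, and by its maximality $(k - tr + t - 1)^+ \le k_q^{\mbox{\footnotesize {\sf opt}}}(n - tr, d)$; rearranging (and noting that the bound is trivial when $k \le tr - (t-1)$, since $k_q^{\mbox{\footnotesize {\sf opt}}} \ge 0$) gives exactly~(\ref{eq:bound-k}).

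The steps are individually short; the crux is the entropy estimate, i.e.\ correctly extracting from Lemma~\ref{cor:sets} that $t - 1$ of the coordinates of $\Ic_0$ are redundant. I would take care that the $t$ recovery sets are genuinely disjoint, so that their union has size at most $tr$ and the conclusion about $a_2, \ldots, a_t$ being determined applies; the preservation of the minimum distance $d$ under the reduction is not an issue, as it is guaranteed by Lemma~\ref{lemma:reduction-k} itself.
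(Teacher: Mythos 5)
Your proof is correct and follows essentially the same route as the paper's: the same single-symbol query $(x_1,\ldots,x_1)$ with disjoint recovery sets, the same entropy bound $H(\Ic) \le tr-(t-1)$ extracted from Lemma~\ref{cor:sets}, and the same reduction via Lemma~\ref{lemma:reduction-k}. In fact, your step of padding $\Ic_0$ to size exactly $tr$ (each added coordinate costing at most one unit of entropy) resolves a detail the paper's proof glosses over: when $|R_1 \cup \cdots \cup R_t| < tr$, applying Lemma~\ref{lemma:reduction-k} directly only yields a code of length $n - |\Ic_0| > n - tr$, whose dimension is bounded by $k_q^{\mbox{\footnotesize {\sf opt}}}(n-|\Ic_0|,d) \ge k_q^{\mbox{\footnotesize {\sf opt}}}(n-tr,d)$, which is the wrong direction; so your version is slightly more complete.
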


%================================================================
\begin{proof}
Since $\Cc$ is an $(n,k,r,t)$ batch code, for a query $(x_{i},\ldots, x_{i})$ of size $t$, for some $i \in [k]$, there exist $t$ disjoint recovery sets $R_1,\ldots, R_t$ with $|R_j| \le r$ for all $j \in [t]$. 

Denote $\Ic \triangleq R_1 \cup R_2 \cup \cdots \cup R_t$. Clearly, $|\Ic| \le tr$. 
Take an arbitrary word $\bldy \in \Cc$. By Lemma~\ref{cor:sets}, there exist indices $a_2 \in R_2, a_3 \in R_3, \cdots, a_t \in R_t$ such that if we fix the values of all coordinates of $\bldy$ indexed by the
sets $R_1, R_2 \setminus \{a_2\}, R_3 \setminus \{a_3\}, \ldots ,R_{t} \setminus \{a_{t}\}$, then the values of the coordinates of $\bldy$ indexed by $S \triangleq \{a_2, a_3, \ldots, a_t\}$ are uniquely determined. 
It follows that there is one-to-one mapping between $\bldy_{\Ic \setminus S}$ and $\bldy_\Ic$.  
Therefore, $H(\Ic) \le tr - (t-1)$.

The main statement now follows by applying Lemma~\ref{lemma:reduction-k}. 
\end{proof}
We remark that the condition $n - tr \ge d$ in Proposition~\ref{cadambe} is necessary for existence of a code of length $n-tr$ and minimum distance $d$. However, it should be noted that any $(n,k,r,t)$ batch code is a $(n,k,r,\beta)$ batch code for $1 \le \beta \le t$. Thus, if $t$ is too large to satisfy this condition, we can take a smaller value $\beta$ instead. 
The following result follows. 
\begin{corollary}
\label{cadambe-2}
Let $\Cc$ be a linear $(n,k,r,t)$ batch code over an alphabet $\mathcal{Q}$ of size $q$ with minimum distance $d$. Then, 
\begin{equation}
 k \le \min_{1 \le \beta \le  t} 
\left\{ \beta r - (\beta-1) + k_q^{\mbox{\footnotesize {\sf opt}}}(n-\beta r, d) \right\} \; . 
\label{eq:bound-k-2}
\end{equation}
\end{corollary}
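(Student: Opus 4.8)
The plan is to derive Corollary~\ref{cadambe-2} directly from Proposition~\ref{cadambe} by exploiting the observation already noted in the surrounding text: any $(n,k,r,t)$ batch code is simultaneously an $(n,k,r,\beta)$ batch code for every $1 \le \beta \le t$. This is immediate from Definition~\ref{def:batch}, since supporting all queries of size $t$ trivially implies supporting all queries of size $\beta \le t$ (one simply ignores the unused recovery sets). Thus the single bound~(\ref{eq:bound-k}) becomes an entire family of bounds, one for each admissible $\beta$, and taking their minimum yields the sharpest conclusion.

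The key steps, in order, are as follows. First I would fix an arbitrary $\beta$ with $1 \le \beta \le t$ and reinterpret $\Cc$ as an $(n,k,r,\beta)$ batch code. Second, I would check the hypothesis of Proposition~\ref{cadambe} for this value, namely $n - \beta r \ge d$. The subtlety here is that this inequality need not hold for every $\beta$ in the range; it is most restrictive for large $\beta$ (since $\beta r$ grows), and may fail when $\beta$ is close to $t$. For those $\beta$ the code of length $n - \beta r$ and minimum distance $d$ simply does not exist, so Proposition~\ref{cadambe} does not apply. Third, for each $\beta$ for which the hypothesis \emph{does} hold, Proposition~\ref{cadambe} gives the bound $k \le \beta r - (\beta - 1) + k_q^{\mbox{\footnotesize {\sf opt}}}(n - \beta r, d)$. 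Finally, since every such bound is a valid upper bound on the same quantity $k$, the code must satisfy the minimum of all of them, which is precisely~(\ref{eq:bound-k-2}).

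The main obstacle, and the point that requires care, is the treatment of values of $\beta$ that violate $n - \beta r \ge d$. For the minimization in~(\ref{eq:bound-k-2}) to range cleanly over all $1 \le \beta \le t$, one must argue that the excluded large-$\beta$ terms can be safely dropped without weakening the stated bound. The cleanest resolution is to note that $k_q^{\mbox{\footnotesize {\sf opt}}}(n - \beta r, d)$ is a non-increasing function of $\beta$ while the combinatorial term $\beta r - (\beta - 1) = (\beta - 1)(r - 1) + r$ is non-decreasing in $\beta$, so the per-$\beta$ bound is typically minimized at a moderate $\beta$; consequently, even if one adopts the convention that $k_q^{\mbox{\footnotesize {\sf opt}}}(m,d)$ is $0$ (or undefined) once $m < d$, the genuinely informative terms are exactly those satisfying the hypothesis, and restricting the minimum to them can only decrease, hence never invalidate, the bound. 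In the write-up I would make the convention explicit or else restrict the minimization to the admissible $\beta$, so that the displayed inequality is unambiguous.

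In summary, the proof is short: apply Proposition~\ref{cadambe} separately for each feasible reinterpretation of the code as an $(n,k,r,\beta)$ batch code, and intersect the resulting constraints.
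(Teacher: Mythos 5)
Your proof takes exactly the same route as the paper: you reinterpret the $(n,k,r,t)$ batch code as an $(n,k,r,\beta)$ batch code for each $1 \le \beta \le t$, apply Proposition~\ref{cadambe} for each such $\beta$, and take the minimum of the resulting bounds. Your extra care about the values of $\beta$ with $n - \beta r < d$ simply makes explicit the paper's own remark preceding the corollary (that when $t$ is too large one takes a smaller $\beta$), so the argument is essentially identical, just stated a bit more carefully.
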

\subsection*{Comparison with the bounds in the literature}
\begin{example}
The asymptotic versions of the classical Singleton and the sphere-packing bounds for a code over alphabet $\Qc$ of size $q$ are
$R \le 1- \delta + o(1)$ and $R \le 1- h_q(\delta/2) +o(1)$, 
respectively, where $R= k/n,~\delta=d/n$, and $h_q(\cdot)$ denotes the $q$-ary entropy function~\cite[Chapter 4]{roth}.

The asymptotic versions of the bounds~(\ref{nonsyst}) and~(\ref{eq:bound-k-2}) can be rewritten as (when ignoring the $o(1)$ term, for any specific value of $\beta$):
\begin{equation}
\label{nonsystasym}
R \le 1-\delta -\frac{\beta -1}{n}\left(\left \lceil \frac{k}{\beta r -\beta+1}\right \rceil -1\right ) \; , 
\end{equation}
and
\begin{equation}
\label{singletonasym}
R \le \frac{\beta(r-1)}{n}+R_q^{\mbox{\footnotesize {\sf opt}}}(n-\beta r, d) \; , 
\end{equation}
respectively, where $R_q^{\mbox{\footnotesize {\sf opt}}}(n, d)$ denotes the maximum rate of any code of length $n$ and minimum distance $d$ over $\Qc$.  
	
For $n - d \gg tr \ge \beta r$, the bound (\ref{nonsystasym}) becomes  $R \le (1-\delta)\cdot\left( 1- \frac{\beta - 1}{\beta r}\right)$. For comparison, when we use the sphere-packing bound for $R_q^{\mbox{\footnotesize {\sf opt}}}(n, d)$, then (\ref{singletonasym}) can be rewritten as 
$$ R \le \frac{\beta(r-1)}{n}+1- h_q(\delta/2) \approx  1 - h_q(\delta/2) \; . $$

Therefore, for large blocklength, $n - d \gg tr$, and for a range of $\delta$ and $r$, the bound (\ref{eq:bound-k-2}) is tighter than (\ref{nonsyst}) (for any $\beta \ge 1$).
\end{example}

%***********************************************%
%
% CONCLUSION
%
%*************************************************%

%\section{Conclusions}

%In this work, we have presented a number of new bounds on the parameters of systematic and non-systematic 
%batch and PIR codes with reconstruction sets of restricted size. 

%% Appendix:
%% If needed a single appendix is created by
%\appendix
%% If several appendices are needed, then the command
%\appendices
%% in combination with further \section-commands can be used.

%% Use \section* for acknowledgement
\section*{Acknowledgment}

The work of E. Thomas and V. Skachek is supported by the Estonian Research Council under the grants PUT405 and IUT2-1.
The authors wish to thank Mart Simisker  for valuable comments on the earliest version of this work.

\end{document}